\newtheorem{observation}{Observation}[section]
 \newtheorem{theorem}{Theorem}[section]
 \newtheorem{lemma}[theorem]{Lemma}
 \newtheorem{example}{Example}[section]
\def\GrabProofArgument[#1]{ #1: \egroup\ignorespaces}
\def\proof{\noindent\textbf\bgroup Proof%
	\@ifnextchar[{\GrabProofArgument}{. \egroup\ignorespaces}}
\newcommand*\samethanks[1][\value{footnote}]{\footnotemark[#1]}
\newcounter{proccnt}
\newcommand{\konote}[1]{}
\title{Fair Allocation of Indivisible Goods to Asymmetric Agents}
\author{
	Alireza Farhadi \thanks{University of Maryland. Email: \texttt{\{farhadi,hajiagha,hadiyami\}@cs.umd.edu, sseddigh@umd.edu}}
	\thanks{Supported in part by NSF CAREER award CCF-1053605,  NSF BIGDATA grant IIS-1546108, NSF AF:Medium grant CCF-1161365, DARPA GRAPHS/AFOSR grant FA9550-12-1-0423, and another DARPA SIMPLEX grant.}
	\and Mohammad Ghodsi \thanks{Sharif University of Technology. Email: \texttt{ghodsi@sharif.edu, mseddighin@ce.sharif.edu}}
	\thanks{Institute for Research in Fundamental Sciences (IPM) – School of Computer Science.}
	\and MohammadTaghi HajiAghayi \samethanks[1] \samethanks[2]
	\and Sebastien Lahaie \thanks{ Google Research. Email: \texttt{slahaie@microsoft.com}}
	\and David Pennock \thanks { Microsoft Research. Email: \texttt{dpennock@microsoft.com}}
    \and Masoud Seddighin \samethanks[3]
	\and Saeed Seddighin \samethanks[1] \samethanks[2]
	\and Hadi Yami \samethanks[1] \samethanks[2]
}
\begin{document}
	\newcommand{\ignore}[1]{}
\renewcommand{\theenumi}{(\roman{enumi}).}
\renewcommand{\labelenumi}{\theenumi}
\sloppy

%
%

\newcommand{\agent}{a}
\newcommand{\agents}{\mathcal{N}}
\newcommand{\ite}{b}
\newcommand{\bundle}{B}
\newcommand{\items}{\mathcal{M}}
\newcommand{\valu}{V}
\newcommand{\MMS}{\mathsf{MMS}}
\newcommand{\WMMS}{\mathsf{WMMS}}
\newcommand{\shares}{E}
\newcommand{\share}{e}
\newcommand{\steinhausfirst}{Steinhaus}
\newcommand{\maxmin}{\textsf{Max-Min allocation}}
\newcommand{\santa}{\textsf{Santa Claus}}
\newcommand{\bezakovaallocating}{Bez{\'a}kov{\'a} and Dani}
\newcommand{\asadpourapproximation}{Asadpour and Saberi}
\newcommand{\chakrabartyallocating}{Chakrabarty, Chuzhoy and Khanna}
\newcommand{\bansalsanta}{Bansal and Sviridenko} 
\newcommand{\annamalaicombinatorial}{Annamalai, Kalaitzisy and Svensson} 
\newcommand{\procacciafirst}{Procaccia and Wang}
\newcommand{\todo}[2]{{ \color{blue} \textit{#1}:\color{magenta}{`#2'}}}
\newcommand{\etal}{\textit{et al.}}
\newcommand{\amanatidisapproximationful}{Amanatidis, Markakis, Nikzad, and Saberi}   

\maketitle

\thispagestyle{empty}

\begin{abstract}
We study fair allocation of indivisible goods to agents with \textit{unequal entitlements}. Fair allocation has been the subject of many studies in both divisible and indivisible settings. Our emphasis is on the case where the goods are indivisible and agents have unequal entitlements. This problem is a generalization of the work by \procacciafirst ~\cite{Procaccia:first} wherein the agents are assumed to be symmetric with respect to their entitlements. Although \procacciafirst\enspace show an almost fair (constant approximation) allocation exists in their setting, our main result is in sharp contrast to their observation. We show that, in some cases with $n$ agents, no allocation can guarantee better than $1/n$ approximation of a fair allocation when the entitlements are not necessarily equal. Furthermore, we devise a simple algorithm that ensures a $1/n$ approximation guarantee.

Our second result is for a restricted version of the problem where the valuation of every agent for each good is bounded by the total value he wishes to receive in a fair allocation. Although this assumption might seem w.l.o.g, we show it enables us to find a $1/2$ approximation fair allocation via a greedy algorithm. Finally, we run some experiments on real-world data and show that, in practice, a fair allocation is likely to exist. We also support our experiments by showing positive results for two stochastic variants of the problem, namely \textit{stochastic agents} and \textit{stochastic items}.
\end{abstract}

\section{Introduction}
In this work, we conduct a study of \textit{fairly} allocating \textit{indivisible goods} among $n$ agents with unequal claims on the goods. Fair allocation is a very fundamental problem that has received attention in both Computer Science and Economics. This problem dates back to 1948 when \steinhausfirst ~\cite{Steinhaus:first} introduced the \textit{cake cutting} problem as follows: given $n$ agents with different valuation functions for a cake, is it possible to divide the cake between them in such a way that every agent receives a piece whose value to him is at least $1/n$ of the whole cake? \steinhausfirst\enspace answered this question in the affirmative by proposing a simple and elegant algorithm which is called \textit{moving knife}. Although this problem admits a straightforward solution, several ramifications of the cake cutting problem have been studied since then, many of which have not been settled after decades~\cite{brams1996fair,robertson1998cake,chen2013truth,procaccia2013cake,edmonds2006cake,
cohler2011optimal,woeginger2007complexity,caragiannis2011towards,FGST17,brams17,babaioff17}. For instance, a natural generalization of the problem in which we discriminate the agents based on their entitlements is still open. In this problem, every agent claims an entitlement $\share_i$ to the cake such that $\sum \share_i = 1$, and the goal is to cut the cake into disproportional pieces and allocate them to the agents such that every agent $\agent_i$'s valuation for his piece is at least $\share_i$ fraction of his valuation for the entire cake. For two agents, Brams \etal ~\cite{brams2008proportional} showed that at least two cuts are necessary to divide the cake between the agents. Furthermore, Robertson \etal ~\cite{robertson1997extensions} proposed a modified version of cut and choose method to divide the cake between two agents with portions $\share_1,\share_2$, where $\share_1$ and $\share_2$ are real numbers. McAvaney, Robertson, and Web~\cite{mcavaney1992ramsey} considered the case when the entitlements are rational numbers. They used Ramsey partitions to show that when the entitlements are rational, one can make a proper division via $O(n^3)$ cuts.

Recently, a new line of research is focused on the fair allocation of indivisible goods. In contrast to the conventional cake cutting problem, in this problem instead of a heterogeneous cake, we have a set $\items$ of indivisible goods and we wish to distribute them among $n$ agents. Indeed, due to trivial counterexamples in this setting\footnote{For instance if there is only one item, at most one agent has a non-zero profit in any allocation.}, the previous guarantee, that is every agent should obtain $1/n$ of his valuation for all items from his allocated set, is impossible to deliver. To alleviate this problem, Budish~\cite{Budish:first} proposed a concept of fairness for the allocation of indivisible goods namely \textit{the maxmin share}. Suppose we ask an agent $\agent_i$ to divide the items between the agents in a way that \textit{he thinks} is fair to everybody. Of course, agent $\agent_i$ does not take into account other agents' valuations and only incorporates his valuation function in the allocation. Based on this, we define $\MMS_i$ equal to the minimum profit that any agent receives in this allocation, according to agent $\agent_i$'s valuation function. Obviously, in order to maximize $\MMS_i$, agent $\agent_i$ chooses an allocation that maximizes the minimum profit of the agents. We call an allocation fair (approximately fair), if every agent $\agent_i$ receives a set of items that is worth at least $\MMS_i$ (a fraction of $\MMS_i$) to him.

It is easy to see that $\MMS_i$ is the best possible guarantee that one can hope to obtain in this setting. If all agents have the same valuation function, then at least one of the agents receives a collection of items that are worth no more than $\MMS_i$ to him. A natural question that emerges here is whether a fair allocation with respect to $\MMS_i$'s is always possible? Although the experiments are in favor of this conjecture, \procacciafirst ~\cite{Procaccia:first} (EC'14) refuted this by an elegant and delicate counterexample. They show such a fair allocation is impossible in some cases, even when the number of agents is limited to 3. On the positive side however, they show an approximately fair allocation can be guaranteed. More precisely, they show that there always exists an allocation in which every agent's profit is at least $2/3\MMS_i$. Such an allocation is called a $2/3$-$\MMS$ allocation. \amanatidisapproximationful ~\cite{amanatidis2015approximation} later provided a proof for the existence of an $\MMS$ allocation for the case, when there are large enough items and the value of each agent for every items is drawn independently from a uniform distribution. A generalized form of this result was later proposed by Kurokawa \etal ~\cite{kurokawa2015can} for arbitrary distributions. Caragiannis \etal ~\cite{Carag} later proved that the maximum Nash welfare (MNW) solution, which selects an allocation that maximizes the product of utilities, for each agent guarantees a $2 / (1 + \sqrt{4n-3})$  fraction of her $\MMS$. In a recent work, Ghodsi \etal ~\cite{hadi34} provided a proof for existence of a $3/4$-$\MMS$ allocation. 

Although it is natural to assume the agents have equal entitlements on the items, in most real-world applications, agents have unequal entitlements on the goods. For instance, in various religions, cultures, and regulations, the distribution of the inherited wealth is often unequal. Furthermore, the division of mineral resources of a land or international waters between the neighboring countries is often made unequally based on the geographic, economic, and political status of the countries. 

For fairly allocating indivisible items to agents with different entitlements, two procedures are proposed in~\cite{brams1996fair}. The first one is based on \textit{Knaster's procedure of sealed bids}. In this method, we have an auction for selling each item. Therefore, for using it all the agents should have an adequate reserve of money which is the main issue of the procedure. The second procedure mentioned in~\cite{brams1996fair} is based on \textit{method of markers} developed by William F. Lucas which is spiritually similar to the moving knife procedure. In this method, first we line up the items, and then the agents place some markers for dividing the items. This method suffers from high dependency of its final allocation to the order of the items in the line. 

Agent duplication is another idea to deal with unequal entitlements. More precisely, when all of the entitlements are fractional numbers, we can duplicate each agent $a_i$ to some agents with similar valuation functions to $a_i$. The goal of this duplication is to reduce the problem to the case of equal entitlements. After the allocation, every agent $a_i$ owns all of the allocated items to her duplicated agents. For instance, assume that we have three agents with entitlements $1/2$, $2/5$, and $1/10$, respectively. In this case, we duplicate the first agent to five agents and the second agent to four agents each having an entitlement of $1/10$. This way, we can reduce our problem to the case of equal entitlements. Although agent duplication may be practical when the items are divisible, in the indivisible case, this method does not apply to the indivisible setting. For instance, if the number of the agents is higher than the number of available items, we cannot allocate anything to some agents. Another issue with this method is that it works only for fractional entitlements.

In this paper, we study fair allocation of indivisible items with different entitlements using a model which resolves the mentioned issues. Our fairness criterion mimics the general idea of Budish for defining maxmin shares. Similar to Budish's proposal, in order to define a maxmin share for an agent $\agent_i$, we ask the following question: how much benefit does agent $\agent_i$ expect to receive from a fair allocation, if we were to divide the goods \textit{only based on his valuation function}? If agent $\agent_i$ expects to receive a profit of $p$ from the allocation, then he should also recognize a minimum profit of $p\cdot \share_j/\share_i$ for any other agent $\agent_j$, so that his own profit per entitlement is a lower bound for all agents. Therefore, a fair answer to this question is the maximum value of $p$ for which there exists an allocation such that agent $\agent_i$'s profit-per-entitlement can be guaranteed to all other agents (according to his own valuation function). We define the maxmin shares of the agents based on this intuition.
 
Recall that we denote the number of agents with $n$ and the entitlement of every agent $\agent_i$ with $\share_i$. We assume the entitlements always add up to 1. For every agent $\agent_i$, we define the weighted maxmin share denote by $\WMMS_i$, to be the highest value of $p$ for which there exists an allocation of the goods to the agents in which every agent $\agent_j$ receives a profit of at least $p\cdot \share_j/\share_i$ based on agent $\agent_i$'s valuation function. Similarly, we call an allocation $\alpha$-$\WMMS$, if every agent $\agent_i$ obtains an $\alpha$ fraction of $\WMMS_i$ from his allocated goods. Notice that in case $\share_i = 1/n$ for all agents, this definition is identical to Budish's definition. Since our model is a generalization of the Budish's model, it is known that a fair allocation is not guaranteed to exist for every scenario. However, whether a $2/3$ approximation or in general a constant approximation $\WMMS$ allocation exists remains an open question.

Our main result is in contrast to that of \procacciafirst. We settle the above question by giving a $1/n$ hardness result for this problem. In other words, we show no algorithm can guarantee any allocation which is better than $1/n$-$\WMMS$ in general. We further complement this result by providing a simple algorithm that guarantees a $1/n$-$\WMMS$ allocation to all agents. As we show in Section \ref{noname}, this hardness is a direct consequence of unreasonably high valuation of agents with low entitlements for some items. Moreover, in Section \ref{hadi} we discuss that not only are such valuation functions unrealistic, but also an agent with such a valuation function has an incentive to misrepresent his valuations (Observation \ref{value}). Therefore, a natural limitation that one can add to the setting is to assume no item is worth more than $\WMMS_i$ for any agent $\agent_i$. We also study the problem in this mildly restricted setting and show in this case a $1/2$-$\WMMS$ guarantee can be delivered via a greedy algorithm. 

In contrast to our theoretical results, we show in practice a fair allocation is likely to exist by providing experimental results on real-world data. The source of our experiments is a publicly available collection of bids for eBay goods and services\footnote{\href{http://cims.nyu.edu/~munoz/data/}{http://cims.nyu.edu/~munoz/data/}}. Note that since those auctions are \textit{truthful}\footnote{An action is called truthful, if no bidder has any incentive to misrepresent his valuation}, it is the users' best interest to bid their actual valuations for the items and thus the market is transparent. More details about the experiments can be found in Section \ref{experimental}. We also support our claim by presenting theoretical analysis for the stochastic variants of the problem in which the valuation of every agent for a good is drawn from a given distribution. 

\subsection{Our Model}
Let $\agents$ be a set of $n$ agents, and $\items$ be a set of $m$ items. Each agent $\agent_i$ has an \textit{additive} valuation function $\valu_i$ for the items. In addition, every agent $\agent_i$ has an entitlement to the items, namely $\share_i$. The entitlements add up to $1$, i.e., $\sum \share_i = 1$. 

Since our model is a generalization of maxmin share, we begin with a formal definition of the maxmin shares for equal entitlements, proposed by Budish~\cite{Budish:first}. In this case, we assume all of the entitlements are equal to $1/n$. Let $\Pi(\items)$ be the set of $n$-partitionings of the items. Define the maxmin
share of agent $\agent_i$ ($\MMS_i$) of player $i$ as
\begin{equation}
\label{def1}
\MMS_i = \max_{\langle A_1, A_2, \ldots, A_n\rangle \in \Pi(\items)} \min_{j \in [n]} \valu_i(A_j)
.
\end{equation}
\color{black}
One can interpret the maxmin share of an agent as his outcome as a divider in a divide-and-choose procedure against adversaries~\cite{Budish:first}. Consider a situation that a cautious agent knows his own valuation on the items, but the valuations of other agents are unknown to him. If we ask the agent to run a divide-and-choose procedure, he tries to split the items in a way that the least valuable bundle is as attractive as possible. 

When the agents have different entitlements, the above interpretation is no longer valid. The problem is that the agents have different entitlements and this discrepancy must somehow be considered in the divide-and-choose procedure. Thus, we need an interpretation of the maxmin share that takes the entitlements into account. 

Let us get back to the case with the equal entitlements. Another way to interpret maxmin share is this: suppose that we ask agent $\agent_i$ to fairly distribute the items in $\items$ between $n$ agents of $\agents$, based on his own valuation function. In an ideal situation (e.g., if the goods are completely divisible), we expect $\agent_i$ to allocate a share with value  $\valu_i(\items) / n$ to every agent. However, since the goods are indivisible, some sort of unfairness is inevitable. For this case, we wish that $\agent_i$ does his best to retain fairness. $\MMS_i$ is in fact, a parameter that reveals how much fairness $\agent_i$ can guarantee, regarding his valuation function. 

Formally, to measure the fairness of an allocation by $\agent_i$, define a value $F^i_A$ for any allocation $A= \langle A_1,A_2,\ldots,A_n\rangle$ as $$F^i_A=\frac{\min_j \valu_i(A_j)}{\valu_i(\items) / n}. $$ In fact, we wish to make sure $\agent_i$ reports an allocation $A^*$ such that $F^i_{A^*}$ is as close to $1$ as possible. The maxmin share of $\agent_i$ is therefore defined as
\begin{equation}
\label{def2}
\MMS_i = F^i_{A^*} (\valu_i(\items) / n).
\end{equation}
It is easy to observe that Equations \eqref{def1} and \eqref{def2} are equivalent, since the fairest allocation in the absence of different entitlements is an allocation that maximizes value of the minimum bundle:

\begin{align*}
\MMS_i &= F^i_{A^*} (\valu_i(\items) / n)\\
	   &= \frac{\min_j \valu_i(A^*_j)}{\valu_i(\items) / n} (\valu_i(\items) / n) = \min_j \valu_i(A^*_j)
\end{align*}

Now, consider the case with different entitlements. Let $e_i$ be the entitlement of agent $\agent_i$. Similar to the second interpretation for $\MMS_i$, ask agent $\agent_i$ to fairly distribute the items between the agents, but this time, considers the entitlements. In an ideal situation (e.g., a completely divisible resource), we expect the allocation to be proportional to the entitlements, i.e. $\agent_i$ allocates a share to agent $\agent_j$ with value exactly $\valu_i(\items) e_j$ (note that when the entitlements are equal, this value equals to $\valu_i(\items)/n$ for every agent). But again, such an ideal situation is very rare to happen and thus we allow some unfairness. In the same way, define the fairness of an allocation $A = \langle A_1,A_2,\ldots,A_n\rangle$ as 

\begin{equation}
F^i_{A} = \min_j \frac{\valu_i(A_j)}{\valu_i(\items) e_j}
\end{equation}

Let $A^* = \langle A^*_1,A^*_2,\ldots,A^*_n\rangle$ be an allocation by $\agent_i$ that maximizes $F^i_{A^*}$. The weighted maxmin share of agent $\agent_i$ is defined in the same way as $\MMS_i$, that is:  

\begin{align*}
\WMMS_i &= F^i_{A^*} \valu_i(\items) e_i = e_i \min_j \frac{\valu_i(A^*_j)}{e_j} 
\end{align*}  
In summery, the value $\WMMS_i$ for every agent $\agent_i$ is defined as follows:
$$
\WMMS_i = \max_{\langle A_1, A_2, \ldots, A_n\rangle \in \Pi(\items)} \min_{j \in [n]}  \valu_i(A_j) \frac{e_i}{e_j}.
$$

For more intuition, consider the following example:
\begin{example}
Assume that we have two agents $\agent_1,\agent_2$ with $e_1 = 1/3$ and $e_2 = 2/3$. Furthermore, suppose that there are 5 items $b_1,b_2,b_3,b_4,b_5$ with the following valuations for $\agent_1$: $\valu_1(\{\ite_1\}) = \valu_1(\{\ite_2\}) = \valu_1(\{\ite_3\}) = 4,  \valu_1(\{\ite_4\}) = 3$ and $\valu_1(\{\ite_5\}) = 9$. For the allocation $A = \langle \{b_5\},\{b_1,b_2,b_3,b_4\} \rangle$, we have $F_A = \min (\frac{9}{24\cdot (1/3)},\frac{15}{24\cdot(2/3)})$ which means $F_A = 15/16$. Moreover, for allocation $A' = \langle \{b_1,b_2\},\{b_3,b_4,b_5\} \rangle$, we have $F_{A'} = \min (\frac{8}{24\cdot (1/3)},\frac{16}{24\cdot(2/3)})$ which means $F_{A'} = 1$. Thus, $A'$ is a fairer allocation than $A$. In addition, $A'$ is the fairest possible allocation and hence, $\WMMS_1 = 1 \cdot 24 \cdot 1/3 = 8$.
\label{ex1}
\end{example}
Example \ref{ex1} also gives an insight about why agent duplication (as introduced in the Introduction) is not a good idea. For this example, if we duplicate agent $\agent_2$, we have three agents with the same entitlements. But any partitioning of the items into three bundles, results in a bundle with value at most $7$ to $\agent_1$.

Finally, an allocation of the items in $\items$ to the agents in $\agents$ is said to be $\alpha-\WMMS$, if the total value of the share allocated to each agent $\agent_i$ is worth at least $\alpha \WMMS_i$ to him. 

\color{black}

%
\section{A Tight $1/n$ Bound on the Optimal Allocation}\label{noname}

In spite of the fact that there exists a $2/3$-$\WMMS$ guarantee when all the entitlements are equal, in our general setting surprisingly we provide a counterexample which proves that there is no guarantee better than $1/n$-$\WMMS$. We complement this result by showing that a $1/n$-$\WMMS$ always exists. Thus, these two theorems make a tight bound for the problem. 

The main property of our counterexample is a large gap between the value of items for different agents. We provide a counterexample according to this property in Theorem \ref{cex}.

\begin{theorem}
\label{cex}
There exists no guarantee better than $1/n$-$\WMMS$ when the entitlements to the items may differ.
\end{theorem}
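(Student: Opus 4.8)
The plan is to construct an explicit instance with $n$ agents in which the weighted maxmin shares force any allocation to be terrible for at least one agent. The key mechanism, as the text hints, is a large gap between how agents with tiny entitlements value a single item and how much they can demand. I would take one ``big'' agent, say $\agent_n$, with entitlement close to $1$ (e.g.\ $\share_n = 1 - (n-1)\epsilon$ for small $\epsilon$), and $n-1$ ``small'' agents each with entitlement $\epsilon$. I would use roughly $n-1$ special items plus possibly a few filler items. Each small agent $\agent_i$ ($i < n$) values one designated item $\ite_i$ at some large value $M$ and everything else at (essentially) $0$; the big agent values all items uniformly. The point is that for a small agent $\agent_i$, the fairest partition he can imagine puts $\ite_i$ alone in one bundle and is constrained by the ratio $\share_i/\share_j$: since $\share_n/\share_i = (1-(n-1)\epsilon)/\epsilon$ is enormous, $\agent_i$ must ``reserve'' a huge pile of value for $\agent_n$; but he only has total value $M$, so the bundle he could guarantee himself is tiny, and hence $\WMMS_i$ is a small positive quantity. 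Crucially, though, if $\agent_i$ does \emph{not} get item $\ite_i$ in the actual allocation, he gets value $0$, which is an arbitrarily bad fraction of $\WMMS_i$.

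The first step is to compute $\WMMS_i$ for each small agent $\agent_i$ precisely from the definition
$$\WMMS_i = \max_{\langle A_1,\ldots,A_n\rangle \in \Pi(\items)} \min_{j} \valu_i(A_j)\frac{\share_i}{\share_j},$$
and show it is a strictly positive number $w_i$ that can be made as small as we like (scaling-wise, $\WMMS_i \approx M\epsilon$). The second step is to argue combinatorially that no allocation can give every small agent a constant fraction of its $\WMMS_i$: there are $n-1$ critical items $\ite_1,\ldots,\ite_{n-1}$ and $n-1$ small agents, so in principle each could receive its own item — I need to break this by adding one more small agent than there are critical items, or by having two small agents covet the \emph{same} item. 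Concretely, let two small agents, $\agent_1$ and $\agent_2$, both value item $\ite_1$ at $M$ and value nothing else; then in any allocation at most one of them gets $\ite_1$, so the other gets value $0 < \alpha\, \WMMS$ for every $\alpha>0$. The third step is to also pin down $\WMMS_n$ for the big agent and confirm that \emph{he} can simultaneously be satisfied (he values everything uniformly, so almost any spread-out allocation is fine for him), so the obstruction genuinely comes from the small agents and the bound is $1/n$ rather than something even weaker.

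Actually, to land exactly on the threshold $1/n$ rather than merely ``no constant approximation,'' I expect the clean construction is subtler: one wants an instance where the \emph{best} achievable approximation is precisely $1/n$, matching the algorithm of the companion theorem. The refined idea is to have the $n$ agents be near-symmetric in a scaled sense but with one shared contested item whose value, weighted by the entitlement ratios, forces the min-ratio bundle down by exactly a factor $n$. I would set this up so that each agent's $\WMMS$ equals some common normalized value $w$, there is essentially one indivisible ``heavy'' item that exactly one agent can receive, and the $n-1$ losers each can be guaranteed only $w/n$ from the leftover light items; tuning the light-item values and entitlements so the arithmetic closes gives the tight $1/n$.

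The main obstacle I anticipate is the simultaneous bookkeeping: I must verify that in the \emph{witness} partition defining each $\WMMS_i$ the entitlement-weighted minimum is achieved at the intended bundle (not at some other agent's bundle), and that in the \emph{actual} allocation no clever reshuffling of the light/filler items lets all agents crawl above the $1/n$ bar. Both are finite checks, but they are sensitive to the exact choice of $\epsilon$, $M$, and the number of filler items, so the delicate part is choosing parameters that make every inequality tight in the right direction at once — exactly the kind of ``elegant and delicate'' balancing the excerpt attributes to the Procaccia--Wang counterexample, here pushed to its extreme by the asymmetry of entitlements.
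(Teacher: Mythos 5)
There is a genuine gap here: your one concrete mechanism does not work, and the part that would give the actual bound is left as an unproven hope. If a small agent values a single designated item at $M$ and everything else at exactly $0$, then his weighted maxmin share is $0$: in any partition into $n\ge 2$ bundles some bundle has value $0$ to him, so $\min_j \valu_i(A_j)\,\share_i/\share_j = 0$ and hence $\WMMS_i=0$. Consequently the agent who loses the contested item and receives value $0$ violates no $\alpha$-$\WMMS$ guarantee whatsoever, and the ``two small agents covet the same item'' step proves nothing. If instead you give the other items a small positive value, then $\WMMS_i$ is itself only on the order of $\share_i$ times his total value, and a losing agent who picks up filler items can easily reach or exceed that benchmark; your claim that he is left with an arbitrarily bad fraction does not follow without a parameter balance you never carry out. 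Finally, you yourself concede that landing on the threshold $1/n$ ``is subtler'' and only gesture at a refined construction; but exhibiting that construction and verifying its $\WMMS$ values is exactly the content of the theorem, so the proposal as written does not establish the statement.

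It may help to know that the paper's construction inverts your picture: the victim is the agent with the \emph{large} entitlement, not the small ones. Take $2n-1$ items, $\share_i=\epsilon$ for $i<n$ and $\share_n=1-(n-1)\epsilon$. Each small agent values $\ite_1,\dots,\ite_{n-1}$ at $\epsilon$, values $\ite_n$ at $1-(n-1)\epsilon$, and values $\ite_{n+1},\dots,\ite_{2n-1}$ at $0$; the big agent values each of $\ite_1,\dots,\ite_n$ at $\frac{1-(n-1)\epsilon}{n}$ and each remaining item at $\epsilon$. One checks $\WMMS_i=\epsilon$ for $i<n$ and $\WMMS_n=1-(n-1)\epsilon$, the latter witnessed by the partition giving $\agent_n$ all of $\ite_1,\dots,\ite_n$. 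But those $n$ items are the only items the $n-1$ small agents value, so any allocation granting every agent a nonzero fraction of his $\WMMS$ leaves $\agent_n$ at most one of them, hence value at most $\frac{1-(n-1)\epsilon}{n}+(n-1)\epsilon$, i.e.\ a fraction $\frac{1/n+n\epsilon}{1-(n-1)\epsilon}$ of $\WMMS_n$, which tends to $1/n$ as $\epsilon\to 0$. The factor $n$ is lost because the big agent's benchmark partition hands him $n$ comparably valued items of which he can keep only one --- a mechanism your small-agents-as-losers sketch does not capture.
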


\begin{proof}
We propose an example that admits no allocation better than $1/n$-$\WMMS$. To this end, consider an instance with $n$ agents and $2n-1$ items and let $\share_i = \epsilon$ for all $i <n$ and $\share_n = 1-(n-1)\epsilon$. The valuation functions of the first $n-1$ agents are the same. For every agent $\agent_i$ with $1 \leq i < n$, $\valu_i(\{\ite_j\})$ is as follows:
$$
\valu_i(\{\ite_j\}) = 
\begin{cases}
\epsilon & \mbox{if } j \leq n-1 \\
1-(n-1) \epsilon & \mbox{if } j = n \\
0 & \mbox{if } j > n.
\end{cases}
$$
Also, for agent $\agent_n$ we have:
$$
\valu_n(\{\ite_j\}) = 
\begin{cases}
\frac{1 - (n-1) \epsilon}{n} & \mbox{if } j \leq n \\
\epsilon & \mbox{if } j > n.
\end{cases}
$$
First, note that $\WMMS_i = \epsilon$ for the first $n-1$ agents and $\WMMS_n = 1 - (n-1) \epsilon$. For the first $n-1$ agents, the optimal partitioning is to allocate $\ite_i$ to $\agent_i$ for all $i \leq n$. Furthermore, the optimal partitioning for $\agent_n$ is to allocate items $\ite_{n+1},\ite_{n+2},\ldots,\ite_{2n-1}$ to the first $n-1$ agents and keep the first $n$ items for himself. In this case, $\valu_n(\{\ite_1,\ite_2,...,\ite_n\}) = 1- (n-1) \epsilon$, and $\valu_n(\{\ite_{n+i}\}) = \epsilon$ for $1 \le i <n$. Therefore, $\WMMS_n = 1- (n-1) \epsilon$.

On the other hand, in any allocation that guarantees a non-zero fraction of $\WMMS$ for every agent, at most one of the items $\ite_1,\ite_2,\ldots,\ite_n$ is allocated to $\agent_n$, since the rest of the items have value $0$ for the first $n-1$ agents. Therefore, the items allocated to $\agent_n$ are worth at most
$$
\frac{1 - (n-1) \epsilon}{n} + (n-1)\epsilon = 1/n +\epsilon(n-1 - \frac{n-1}{n})
\leq 1/n + n\epsilon.
$$
to him.
Thus, the best fraction of $\WMMS$ that can be guaranteed is 
\begin{equation}
\label{ce}
\frac{1/n + n \epsilon}{1 - (n-1)\epsilon}
\end{equation}

Equation \eqref{ce} can be made arbitrarily close to $1/n$, by choosing sufficiently small $\epsilon$. Thus, no allocation can guarantee an approximation better than $1/n$ for this example.

\end{proof}

\begin{algorithm}[t]
	\KwIn{$\agents$, $\items$, valuation functions $V_1, \ldots, V_n$, and entitlements $\share_1, \ldots, \share_n$ (without loss of generality, sorted in descending order).} 
	\KwOut{Allocation $A = A_1, \ldots, A_n$.} 
	
	\hrulefill
	
	\begin{algorithmic}[1]
		\FOR{$i$ from one to $|\items|$}
		\STATE {assign an unassigned item $\ite_j$ to $\agent_{i \bmod |\agents|}$ where $V_{i \bmod |\agents|}(\{\ite_j\})$ is maximum among unassigned items}
		\ENDFOR	
		\caption{$1/n$-$\WMMS$ allocation}
	\end{algorithmic}
	\label{greedy}
\end{algorithm}

Theorem \ref{cex} gives a $1/n$-$\WMMS$ upper-bound. In Theorem \ref{tight}, we show that the provided upper-bound is tight. Algorithm \ref{greedy} uses a simple greedy procedure, which is spiritually similar to an algorithm in~\cite{amanatidis2015approximation}, guaranteeing $1/n$-$\WMMS$ allocation as follow: sort the agents in descending order of entitlements. Starting from the first agent, ask every person to collect the most valuable item from the remaining set, one by one. Repeat the process until no more item is left. 

\begin{theorem}
\label{tight}
Algorithm \ref{greedy} guarantees a $1/n$-$\WMMS$ allocation.
\end{theorem}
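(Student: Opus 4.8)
The plan is to show that every agent $\agent_i$ gets value at least $\WMMS_i/n$ from the round-robin allocation of Algorithm~\ref{greedy}. If $\WMMS_i=0$ there is nothing to prove, so fix $\agent_i$ with $\WMMS_i>0$; note this already forces $m\ge n$, since if $m<n$ every $n$-partition of $\items$ has an empty part and $\WMMS_i$ would be $0$. Because the entitlements are sorted with $\share_1\ge\share_2\ge\cdots\ge\share_n$, agent $\agent_i$ takes the $i$-th turn in every round; list the items $b^{(1)},b^{(2)},\ldots,b^{(m)}$ in the order in which they are picked, so $\agent_i$ picks exactly $b^{(i)},b^{(n+i)},b^{(2n+i)},\ldots$

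First I would establish the standard round-robin estimate, which is valid no matter what the other agents' valuations are: when $\agent_i$ makes its $t$-th pick $b^{((t-1)n+i)}$, every item picked at a \emph{later} step is still unassigned, so $\valu_i(\{b^{((t-1)n+i)}\})\ge \valu_i(\{b^{(k)}\})$ for all $k>(t-1)n+i$. Grouping the tail $b^{(i)},b^{(i+1)},\ldots,b^{(m)}$ into consecutive blocks of at most $n$ items, each headed by one of $\agent_i$'s picks, that pick is worth to $\agent_i$ at least the average value of its block, hence at least $1/n$ of the block's total; summing over blocks gives $\valu_i(A_i)\ge\tfrac1n\,\valu_i(\{b^{(i)},\ldots,b^{(m)}\})$. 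Next, $\{b^{(1)},\ldots,b^{(i-1)}\}$ is a set of $i-1$ items, and deleting any $i-1$ items costs no more value than deleting $\agent_i$'s $i-1$ most valuable items $T$; therefore $\valu_i(\{b^{(i)},\ldots,b^{(m)}\})=\valu_i(\items)-\valu_i(\{b^{(1)},\ldots,b^{(i-1)}\})\ge\valu_i(\items\setminus T)$, so it suffices to prove $\valu_i(\items\setminus T)\ge\WMMS_i$.

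This last inequality is the heart of the argument and is where the entitlement ordering is used. Take a partition $\langle A^*_1,\ldots,A^*_n\rangle$ achieving $\WMMS_i$, so $\valu_i(A^*_j)\ge\WMMS_i\,\share_j/\share_i$ for every $j$; for $j\le i$ we have $\share_j\ge\share_i$, hence $\valu_i(A^*_j)\ge\WMMS_i>0$. Thus $A^*_1,\ldots,A^*_i$ are $i$ pairwise disjoint nonempty bundles, each worth at least $\WMMS_i$ to $\agent_i$. Since $|T|=i-1$, by pigeonhole some $A^*_{j_0}$ with $j_0\le i$ contains no item of $T$, i.e.\ $A^*_{j_0}\subseteq\items\setminus T$, which already gives $\valu_i(\items\setminus T)\ge\valu_i(A^*_{j_0})\ge\WMMS_i$. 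Combining with the previous paragraph yields $\valu_i(A_i)\ge\tfrac1n\WMMS_i$. The main obstacle is spotting this pigeonhole coupling: the only harm round-robin does to $\agent_i$ is that $i-1$ items are taken before $\agent_i$'s first move, while a small-index (relatively high-entitlement) agent is guaranteed $i$ bundles of value $\ge\WMMS_i$ in its own maxmin partition — exactly enough to absorb that loss. One should also verify the degenerate cases ($\WMMS_i=0$, $i=1$ where $T=\varnothing$, and $m$ not divisible by $n$) and that the blocking estimate holds against arbitrary opposing valuations.
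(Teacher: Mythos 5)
Your proof is correct and follows essentially the same route as the paper: round-robin gives agent $\agent_i$ at least $1/n$ of the value of everything except the $i-1$ items most valuable to him, and a pigeonhole argument over the bundles of his optimal partition that belong to agents with entitlement at least $\share_i$ shows that this remaining value is at least $\WMMS_i$. The only minor difference is that you derive $V_i(\items\setminus T)\ge\WMMS_i$ directly from the definitional bound $V_i(A^*_j)\ge\WMMS_i\,\share_j/\share_i$ for $j\le i$, whereas the paper routes the comparison through $V_i(B^*_i)$ using the (tacit, WLOG) claim that higher-entitlement agents receive higher-value bundles in the optimal partition; your version sidesteps that claim, which is a small simplification rather than a different approach.
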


\begin{proof}
If the number of items is smaller than the number of agents then the proof is trivial. Therefore, from this point on, we assume $m \geq n$.
Without loss of generality we assume agents are sorted in descending order of their entitlements, that is $\share_1 \geq \share_2 \ldots \geq \share_n.$ The goal is to prove that for each agent $\agent_i$ he receives at least $1/n$-$\WMMS_i$ using Algorithm \ref{greedy}. Suppose that the optimal allocation for $\agent_i$ is $B^{*} = \langle B^{*}_1, \ldots, B^{*}_n \rangle$. Without loss of generality suppose items are sorted according to their value for $\agent_i$ in descending order that is:
 $$V_i(\{b_1\}) \geq V_i(\{b_2\}) \geq \ldots \geq V_i(\{b_{|\items|}\}).$$
We call items $b_1$, $b_2$, \ldots, and $b_{i-1}$ \textit{heavy items} for $\agent_i$. Let $H$ be the set of heavy items for $\agent_i$. Since the entitlements of agents are sorted from $\agent_1$ to $\agent_n$, for agents $\agent_j$ and $\agent_{j'}$ when $\share_j > \share_{j'}$ we have: 
\begin{equation}
\label{napprox1}
V_i(B^{*}_{j}) \geq V_i(B^{*}_{j'}).
\end{equation}
Now, the goal is to prove $V_i(B^{*}_i) \leq V_i(\items \setminus H)$ and use it to the guarantee of the algorithm. Since $B^{*}_i \subseteq \items$, if $B^{*}_i \cap H = \emptyset$, clearly $V_i(B^{*}_i) \leq V_i(\items \setminus H)$ holds. In case $B^{*}_i \cap H \neq \emptyset$, consider agent $\agent_{k'}$ where $1 \le k' <i$ and $B^{*}_{k'} \cap H = \emptyset$.  According to Inequality (\ref{napprox1}), $\agent_{k'}$ has a greater entitlement than $\agent_i$. Therefore, we have: $V_i(B^{*}_{k'}) \geq V_i(B^{*}_i)$ which yields that $B^{*}_{k'}$ is worth at least $V_i(\{b_{k}\})$ for $\agent_i$ where all the items of $B^{*}_{k'}$ are in $\items - H$. Therefore, we can imply that $V_i(B^{*}_i) \leq V_i(\items \setminus H).$
The items are sorted in the descending order of their value to $\agent_i$ from $b_1$ to $b_{|\items|}$ which implies $$V_i(\{b_i\}) + V_i(\{b_{n + i}\}) + V_i(\{b_{2n + i}\}) + \ldots \geq V_i(\items \setminus H)/n.$$ 
The $l$-th assigned item to $\agent_i$ by the algorithm is not worth less than $V_i(\{b_{(l-1)n + i}\})$. Hence, the assigned items to $\agent_i$ is worth at least $V_i(\{b_i\}) + V_i(\{b_{n + i}\}) + V_i(\{b_{2n + i}\}) + \ldots$ for him which is not less than $V_i(\items \setminus H)/n \geq V_i(B^{*}_i)/n \geq 1/n$-$\WMMS_i$.

\end{proof}

\section{Allocation for the Restricted Case}\label{hadi}

In Section \ref{noname}, we gave a tight $1/n$-$\WMMS$ guarantee for the fair allocation problem with unequal entitlements . In this section, we consider a reasonable restriction of the problem which gives a $1/2$-$\WMMS$ guarantee. In this restricted setting the value of each item $\ite_j$ to each agent $\agent_i$ is no more than $\WMMS_i$. Observation \ref{value} shows that how this assumption can be invaluable. 

\begin{observation}
\label{value}
If $V_i(\ite_j) \geq \WMMS_i$, it is in the best interest of $\agent_i$ to report his valuation for $\ite_j$ equal to infinity. Because his $\WMMS$ may increase in this way, and he achieves more items after the allocation of the algorithm, because his $\WMMS$ may increase in this way, and he will be satisfied even if he receives only this item.
\end{observation}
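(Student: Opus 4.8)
The plan is to read ``it is in the best interest of $\agent_i$'' as the statement that the setting is not incentive compatible: an agent $\agent_i$ with $\valu_i(\{\ite_j\}) \ge \WMMS_i$ can report the inflated valuation $\widehat\valu_i$ that agrees with $\valu_i$ on every item except $\widehat\valu_i(\{\ite_j\}) = \infty$, and thereby (i) never lower, and possibly raise, the weighted maxmin value the algorithm is obliged to guarantee him, and (ii) be handed $\ite_j$ by the algorithm, after which his true payoff already exceeds anything the truthful run could promise. Throughout I would measure $\agent_i$'s utility by $\valu_i$ (his true valuation) of the bundle he is allocated, keeping all other agents' reports fixed.

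The first point I would prove is monotonicity of the weighted maxmin share in an agent's own valuation: if $\widehat\valu_i \ge \valu_i$ pointwise then $\widehat\WMMS_i \ge \WMMS_i$. This is one line from the definition $\WMMS_i = \max_{P \in \Pi(\items)} \min_{k} \valu_i(A_k)\,\share_i/\share_k$ --- evaluate the $\WMMS_i$-optimal partition $A^*$ under $\widehat\valu_i$: only the single bundle of $A^*$ containing $\ite_j$ changes value, and it only goes up, so the inner minimum does not decrease. Hence inflating $\ite_j$ cannot shrink the target $\alpha\WMMS_i$ (``his $\WMMS$ may increase in this way''), which already removes one possible downside of the deviation. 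For point (ii) I would note that under $\widehat\valu_i$ the singleton $\{\ite_j\}$ has reported value $\infty$, so it trivially clears the $\alpha$-$\widehat\WMMS_i$ bar, while its true value $\valu_i(\{\ite_j\}) \ge \WMMS_i \ge \alpha\,\WMMS_i$; thus if $\agent_i$ ends up holding $\ite_j$ he is fully satisfied even with nothing else. To see that the inflated report actually routes $\ite_j$ to him, I would instantiate the mechanism with a greedy rule such as Algorithm~\ref{greedy}: at $\agent_i$'s first turn $\ite_j$ is his unique maximizer under $\widehat\valu_i$, so he claims it whenever it is still available, and it is easy to construct an instance --- $\ite_j$ worthless to the agents who move before $\agent_i$, and an item $\ite'$ with $\valu_i(\ite')<\valu_i(\{\ite_j\})$ that a truthful $\agent_i$ would grab first only to lose $\ite_j$ to a later agent --- in which the truthful $\agent_i$ is left with value below $\WMMS_i$ while the manipulating $\agent_i$ secures $\ite_j$. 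That instance witnesses a strict gain and hence the claimed incentive.

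The step that needs the most care is the last one, and the genuine caveat is that ``best interest'' is mechanism-dependent: against an arbitrary $\alpha$-$\WMMS$ rule the inflated report is not a dominant strategy (e.g.\ it can cost $\agent_i$ a valuable early greedy pick), so what the observation really asserts --- and all that is needed to motivate the next section --- is the \emph{existence} of a profitable deviation whenever some item is worth at least an agent's $\WMMS$ to him. This is precisely the situation the restriction $\valu_i(\{\ite_j\}) < \WMMS_i$ for all $i,j$ rules out.
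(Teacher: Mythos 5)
Your proposal is correct and follows essentially the same line as the paper, which offers no separate proof of Observation~\ref{value}: its entire justification is the informal sentence inside the observation itself (the reported $\WMMS$ may increase, the agent then captures the item, and its true value alone already satisfies him). Your write-up is a faithful formalization of exactly that reasoning, and the extras you add --- the pointwise monotonicity of $\WMMS_i$ in the agent's own reported valuation, the explicit instance under Algorithm~\ref{greedy} witnessing a strict gain, and the caveat that the claim is the existence of a profitable deviation rather than a dominant strategy against every $\alpha$-$\WMMS$ rule --- only make precise what the paper leaves implicit.
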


In this section, we provide an algorithm with $1/2$-$\WMMS$ guarantee for the restricted case of the problem. For a case that the entitlements are equal, an algorithm, namely \emph{bag filling} guarantees $1/2$-$\WMMS$ for all the agents. In the bag filling algorithm, we start with an empty bag. In each step, we add a remaining item to the bag. After each addition, if the total value of items in the bag becomes more than $1/2$-$\WMMS_i$ for an unsatisfied agent $\agent_i$, we allocate all the items in the bag to him and repeat the procedure with an empty bag. After running this simple procedure, each agent $\agent_i$ receives at least $1/2$-$\WMMS_i$.

Our algorithm allocates the items to agents in a more clever way. We allocate an item to an agent in each step of the algorithm until each agent $\agent_i$ receives at least $1/2$-$\WMMS_i$ by the allocation. To this end, in each step, first the algorithm for each agent $\agent_i$ creates a candidate set of items where $\ite_j$ is in the candidate set of $\agent_i$ if $V_i(\{\ite_j\})/V_i(\items)F^i_{A^{*}}$ be the maximum number among all of the unsatisfied agents. Then, the algorithm chooses unsatisfied agent $\agent_i$ and item $\ite_j$ in its candidate set which maximize $V_i(\{\ite_j\})/V_i(\items)F^i_{A^{*}}$ among all of the unsatisfied agents and items in their candidate sets, and assigns this item to the agent.

\begin{algorithm} [t]
    \KwIn{$\agents$, $\items$, valuation functions $V_1, \ldots, V_n$, and entitlements $\share_1, \ldots, \share_n$ (without loss of generality, sorted in descending order).} 
    \KwOut{Allocation $A = A_1, \ldots, A_n$.} 

\hrulefill

\begin{algorithmic}[1]

    \WHILE{$\items \neq \emptyset$ and $\exists A_i$ where $V_i(A_i) < 1/2$-$\WMMS_i$}
        \STATE {define candidate set $C_i = \emptyset$ for each agent $\agent_i$.}    
        \FOR{$\ite_j \in \items$}
            \STATE {add $\ite_j$ to $C_i$ where $V_i(\{b_j\})/V_i(\items)F^i_{A^{*}}$ is maximum among all of the unsatisfied agents.}
        \ENDFOR
        \STATE { Choose unsatisfied agent $\agent_i$ and item $\ite_j$ in its candidate set which maximize $V_i(\{\ite_j\})/V_i(\items)F^i_{A^{*}}$ among all of the unsatisfied agents and items in their candidate sets.}
        \STATE { Assign item $\ite_j$ to agent $\agent_i$ }
    \ENDWHILE
 \caption{$1/2$-$\WMMS$ allocation for the restricted case}
   \label{restricted}
 \end{algorithmic}
 
\end{algorithm}

Before proving Algorithm \ref{restricted} guarantees a $1/2$-$\WMMS$ allocation, we prove an auxiliary lemma which argues that using Algorithm \ref{restricted} no agent receives more than his $\WMMS$.

\begin{lemma}
\label{restricted_lemma}
Algorithm \ref{restricted} does not allocate more than $\WMMS_i$ for any agent $\agent_i$.
\end{lemma}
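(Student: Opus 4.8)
The plan is to track, step by step, how much total value Algorithm~\ref{restricted} has committed to any fixed agent $\agent_i$, and argue that the moment an allocation to $\agent_i$ could push his total above $\WMMS_i$, the algorithm no longer considers him. The key observation is that once $V_i(A_i) \ge 1/2$-$\WMMS_i$, agent $\agent_i$ becomes ``satisfied'' and is removed from the pool of agents competing in the \textbf{for} loop and in the selection step; hence no further item is ever assigned to him. Therefore the final bundle $A_i$ consists of the items assigned to $\agent_i$ during the steps when he was still unsatisfied, plus \emph{at most one additional item}, namely the last item assigned to him (the one that flipped him from unsatisfied to satisfied).

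The argument then splits according to the size of that last item. First I would note that just before the last assignment to $\agent_i$, his accumulated value was strictly less than $1/2$-$\WMMS_i$. Adding one more item $\ite_j$ gives a total of at most $1/2\text{-}\WMMS_i + V_i(\{\ite_j\})$. By the restriction hypothesis of this section, $V_i(\{\ite_j\}) \le \WMMS_i$, so naively this only bounds the total by $3/2\,\WMMS_i$, which is not good enough --- this is the main obstacle. To close the gap I would use the structure of the selection rule more carefully: I claim that while $\agent_i$ is unsatisfied, \emph{every} item assigned to $\agent_i$ has value at most $1/2$-$\WMMS_i$ to him, or more precisely that the first item assigned to $\agent_i$ already has small enough value. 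The right way to see this is: if some single item $\ite_j$ had $V_i(\{\ite_j\}) > 1/2\text{-}\WMMS_i$, then $\agent_i$ would become satisfied upon receiving it regardless of what he held before; so in fact $\agent_i$ receives \emph{only that one item} in total, and $V_i(A_i) = V_i(\{\ite_j\}) \le \WMMS_i$ by the restriction. In the complementary case every item $\agent_i$ ever receives has value $\le 1/2\text{-}\WMMS_i$; then before the final assignment his total is $< 1/2\text{-}\WMMS_i$ and after it his total is $< 1/2\text{-}\WMMS_i + 1/2\text{-}\WMMS_i = \WMMS_i$, as desired.

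So the proof reduces to a clean case analysis on whether $\agent_i$ ever receives an item worth more than half his $\WMMS$: in the ``yes'' case he receives exactly one item and the restriction hypothesis $V_i(\{\ite_j\}) \le \WMMS_i$ finishes it; in the ``no'' case the stopping condition of the \textbf{while} loop together with the bound of $1/2$-$\WMMS_i$ per item finishes it. I would write this up by first formally stating that a satisfied agent never receives another item (immediate from the loop guard and the candidate-set construction, which range only over unsatisfied agents), then doing the two cases. The only subtlety to be careful about is the edge case where $\agent_i$ becomes satisfied and unsatisfied is vacuous --- but since we only ever decrease the set of unsatisfied agents, this does not arise. I expect the write-up to be short; the conceptual content is entirely in identifying that ``satisfied'' is absorbing and in the per-item bound of $1/2$-$\WMMS_i$ coming from the two-case split.
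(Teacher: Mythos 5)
There is a genuine gap, and it sits exactly at the obstacle you yourself identified. In your ``yes'' case you argue: if some item $\ite_j$ with $V_i(\{\ite_j\}) > \WMMS_i/2$ is assigned to $\agent_i$, then $\agent_i$ becomes satisfied upon receiving it, ``so in fact $\agent_i$ receives only that one item in total.'' That inference is invalid. The fact that satisfied agents never receive further items only shows that $\ite_j$ is the \emph{last} item $\agent_i$ receives; it says nothing about items he may have accumulated \emph{before} $\ite_j$ while he was still unsatisfied. In that scenario his final value is bounded only by (something less than) $\WMMS_i/2$ plus $V_i(\{\ite_j\}) \le \WMMS_i$, i.e.\ by $3\WMMS_i/2$ --- precisely the bound you said was not good enough. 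Nothing in your write-up rules out the configuration ``small item(s) first, big item last,'' so the case analysis does not close.

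What is needed (and what the paper's proof of Lemma \ref{restricted_lemma} supplies) is an argument that this configuration cannot arise under Algorithm \ref{restricted}, and this must use the greedy selection rule, not just the fact that ``satisfied'' is absorbing. Concretely: suppose $\agent_i$'s last item $\ite_j$ satisfies $V_i(\{\ite_j\}) > \WMMS_i/2$ while some earlier item $\ite_k$ with $V_i(\{\ite_k\}) < \WMMS_i/2$ was assigned to him. At the step when $\ite_k$ was assigned, $\ite_j$ was still unassigned and $\agent_i$ was unsatisfied; $\ite_j$ then lay either in $\agent_i$'s candidate set, or in the candidate set of some other unsatisfied agent $\agent_{i'}$ with $V_{i'}(\{\ite_j\})/V_{i'}(\items)F^{i'}_{A^{*}} \ge V_i(\{\ite_j\})/V_i(\items)F^{i}_{A^{*}}$. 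Either way, the pair involving $\ite_j$ had strictly larger normalized value than the pair $(\agent_i,\ite_k)$ actually chosen, contradicting the maximizing choice in line 6 of the algorithm. (Equivalently: the normalized values of assigned items are non-increasing over the run, so an agent's later items are never worth more to him than his earlier ones; hence a ``big'' item can only be his first, which is the statement you asserted but did not prove.) Your ``no'' case is fine, but without this priority argument the lemma is not established.
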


\begin{proof}
The algorithm does not assign anymore items to a satisfied agent. Hence, any satisfied agent $\agent_i$ has less than $1/2$-$\WMMS_i$ value of items in $A_i$ before he receives the last item. For the sake of contradiction, suppose that the algorithm allocates more than $\WMMS_i$ to $\agent_i$. Without loss of generality, suppose that the last item allocated to $\agent_i$ is $\ite_j$. Since before allocating $\ite_j$ to $\agent_i$ he had less than $1/2$-$\WMMS_i$ value of items in $A_i$, the value of $\ite_j$ to $\agent_i$ is more than $1/2$-$\WMMS_i$. Since $V_i(b_j) \leq \WMMS_i$, before allocating $\ite_j$ to $\agent_i$ we have $A_i \neq \emptyset$. Since $V_i(b_j)$ is more than the value of all the other allocated items to $\agent_i$, $\ite_j$ was not in the candidate set of $\agent_i$ when we were assigning the other items to $\agent_i$. Hence, $\ite_j$ was in the candidate set of an other agent $\agent_{i'}$. Therefore, $V_{i'}(\{\ite_j\})/V_{i'}(\items)F^{i'}_{A^{*}} \ge V_i(\{\ite_j\})/V_i(\items)F^i_{A^{*}}$. Since this value is greater than the values of all other items in $A_i$, the algorithm first allocate $\ite_j$ to unsatisfied agent $\agent_{i'}$.
\end{proof}

Now, using Lemma \ref{restricted_lemma}, we prove the approximation guarantee of the algorithm.

\begin{theorem}
\label{restrictedalgorithm}
Algorithm \ref{restricted} ensures a $1/2$-$\WMMS$ guarantee when for each agent $\agent_i$ and item $\ite_j$, $V_i(\ite_j) \leq \WMMS_i.$
\end{theorem}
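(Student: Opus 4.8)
The plan is to argue by contradiction about how the \textbf{while} loop of Algorithm~\ref{restricted} terminates. Since each iteration that does not fall out of the loop assigns one item of $\items$ to some agent, the loop terminates; and since items are never taken away, an agent that is once satisfied ($V_i(A_i)\ge\frac12\WMMS_i$) stays satisfied, so an agent unsatisfied at termination was unsatisfied during the \emph{entire} run. We may assume $\WMMS_i>0$ for every agent (an agent with $\WMMS_i=0$ is satisfied from the start, never enters the set of unsatisfied agents, and is irrelevant), which also guarantees $V_i(\items)>0$ and $F^i_{A^*}>0$ so every ratio below is well defined. Thus it suffices to rule out the case that the loop halts because $\items=\emptyset$ while some $\agent_i$ still has $V_i(A_i)<\frac12\WMMS_i$; I will in fact show this forces $V_i(A_i)\ge\WMMS_i$, a contradiction.

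Fix such an $\agent_i$. For each item $\ite_j$ set $w_j := V_i(\{\ite_j\})/\bigl(V_i(\items)\,F^i_{A^*}\bigr)$, the quantity the algorithm maximizes on behalf of $\agent_i$. Recalling the identity $\WMMS_i=\share_i\,F^i_{A^*}\,V_i(\items)$, we have $w_j=V_i(\{\ite_j\})\,\share_i/\WMMS_i$, and by additivity $\sum_{\ite_j\in S}w_j=V_i(S)\,\share_i/\WMMS_i$ for every $S\subseteq\items$. First I would record that $\sum_{\ite_j\in\items}w_j\ge 1$: in $\agent_i$'s optimal partition $A^*=\langle A^*_1,\dots,A^*_n\rangle$ we have $V_i(A^*_k)\ge\WMMS_i\,\share_k/\share_i$ for every $k$ (this is what optimality of $A^*$ gives), hence $V_i(\items)=\sum_k V_i(A^*_k)\ge(\WMMS_i/\share_i)\sum_k\share_k=\WMMS_i/\share_i$, i.e.\ $V_i(\items)\,\share_i/\WMMS_i\ge1$.

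The core step is the per-bundle estimate $\sum_{\ite_j\in A_k}w_j\le\share_k$ for \emph{every} agent $\agent_k$. Take any item $\ite_j$ assigned by the algorithm to $\agent_k$; at that iteration $\ite_j$ lay in the candidate set $C_k$, which by construction means $\agent_k$ maximized $V_\ell(\{\ite_j\})/(V_\ell(\items)F^\ell_{A^*})$ over all unsatisfied agents $\agent_\ell$. Since $\agent_i$ was unsatisfied at that time, this yields $V_k(\{\ite_j\})\,\share_k/\WMMS_k\ge V_i(\{\ite_j\})\,\share_i/\WMMS_i=w_j$. Summing over $\ite_j\in A_k$ and using additivity gives $\sum_{\ite_j\in A_k}w_j\le(\share_k/\WMMS_k)\,V_k(A_k)$, and now $V_k(A_k)\le\WMMS_k$: for a satisfied $\agent_k$ this is precisely Lemma~\ref{restricted_lemma} (which is the only place the hypothesis $V_k(\ite_j)\le\WMMS_k$ is invoked), and for an unsatisfied one it is immediate from $V_k(A_k)<\frac12\WMMS_k$. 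Hence $\sum_{\ite_j\in A_k}w_j\le\share_k$. Because $\items=\emptyset$ at termination, the bundles $A_1,\dots,A_n$ partition $\items$, so
$$
1\le\sum_{\ite_j\in\items}w_j=\sum_{\ite_j\in A_i}w_j+\sum_{k\neq i}\sum_{\ite_j\in A_k}w_j\le\sum_{\ite_j\in A_i}w_j+\sum_{k\neq i}\share_k=\sum_{\ite_j\in A_i}w_j+(1-\share_i),
$$
whence $\sum_{\ite_j\in A_i}w_j\ge\share_i$, that is $V_i(A_i)\ge\WMMS_i$ — contradicting that $\agent_i$ is unsatisfied. Therefore the loop can only halt with every agent satisfied, which is exactly the claimed $\frac12$-$\WMMS$ guarantee.

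The part I expect to be delicate is this core step: one must use that $\agent_i$ stays unsatisfied for the whole run (so that it competes for every still-available item and the candidate-set inequality $V_k(\{\ite_j\})\share_k/\WMMS_k\ge w_j$ holds for each assigned item), and one must correctly pass from $\sum_{\ite_j\in A_k}w_j\le(\share_k/\WMMS_k)V_k(A_k)$ to $\le\share_k$ via Lemma~\ref{restricted_lemma} — the one place additivity of valuations and the item-size restriction are both essential. The rest is bookkeeping around the identity $\WMMS_i=\share_i F^i_{A^*}V_i(\items)$.
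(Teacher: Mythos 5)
Your proof is correct and takes essentially the same route as the paper's: compare the unsatisfied agent's normalized item values $V_i(\{\ite_j\})/(V_i(\items)F^i_{A^*})$ with the recipients' via the candidate-set maximality (using that $\agent_i$ stays unsatisfied), bound each recipient's normalized total by its entitlement via Lemma \ref{restricted_lemma}, and contradict the normalization fact $\sum_j w_j \ge 1$ (equivalently $F^i_{A^*}\le 1$). The only difference is bookkeeping — you isolate $A_i$ and conclude $V_i(A_i)\ge\WMMS_i$, whereas the paper sums the recipients' normalized values over all items to get a total strictly below $1$ — and you are somewhat more explicit about monotonicity of satisfaction and the degenerate $\WMMS_i=0$ case.
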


\begin{proof}
It is clear that if the algorithm satisfies all the agents, it ensures the approximation guarantee. Now, for the sake of contradiction assume that there exists an unsatisfied agent $\agent_i$ at the end of the algorithm. For each item $\ite_j$ we define$$v'_{b_j} = V_{i'}(\ite_j)/V_{i'}(\items)F^{i'}_{A^{*}}$$ where $\agent_{i'}$ is the recipient of $\ite_j$ in the allocation.
Since $v'_{b_j}$ is maximal according to the algorithm, and $\agent_i$ is not a satisfied agent, we have: 
\begin{equation}
\label{uneq1}
V_i(\ite_j)/V_i(\items)F^i_{A^{*}} \leq V_{i'}(\ite_j)/V_{i'}(\items)F^{i'}_{A^{*}}.
\end{equation}
Lemma \ref{restricted_lemma} implies that $\sum_{b_j}{v'_{b_j}} \leq 1$, and since we have at least one unsatisfied agent we can write:
\begin{equation}
\label{uneq2}
\sum_{b_j}{v'_{b_j}} < 1.
\end{equation}
Inequality (\ref{uneq1}) along with Inequality (\ref{uneq2}) implies
\begin{equation}
\label{uneq3}
\sum_{b_j}{V_i(b_j)/V_i(\items)F^i_{A^{*}}} < 1.
\end{equation}
Finally Inequality \eqref{uneq3} yields that $\sum_{b_j}{V_i(b_j)} < V_i(\items)F^i_{A^{*}}$ which is a contradiction.
\end{proof}

\section{Empirical Results}\label{experimental}
As we discussed in Section \ref{noname}, in extreme cases, making a $\WMMS$ allocation or even an approximately $\WMMS$ allocation is theoretically impossible. However, our counter-example is extremely delicate and thus very unlikely to happen in real-world. Here, we show in practice fair allocations w.h.p exist, especially when the number of items is large.

\begin{figure}[t]
	\begin{center}
	\includegraphics[width=12cm]{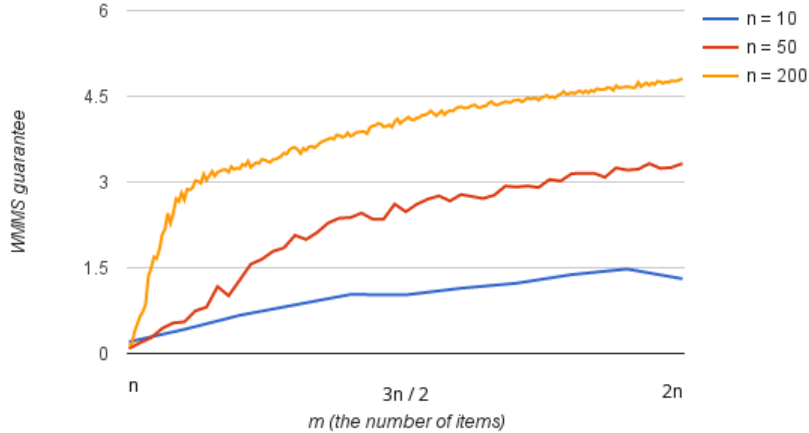}\vspace{-0.5cm}	
	\end{center}		
	\caption{The vertical line denotes the ratio of the valuation of the allocated set to the maxmin guarantee of the agents. The horizontal line shows the number of items varying from $n$ to $2n$. Blue, red, and yellow poly lines illustrate the performance of our algorithm for $n=10$, $n=50$, and $n=200$ respectively.}
	\label{gudgud1}
\end{figure}

\begin{figure}[t]
	\begin{center}
		\includegraphics[width=12cm]{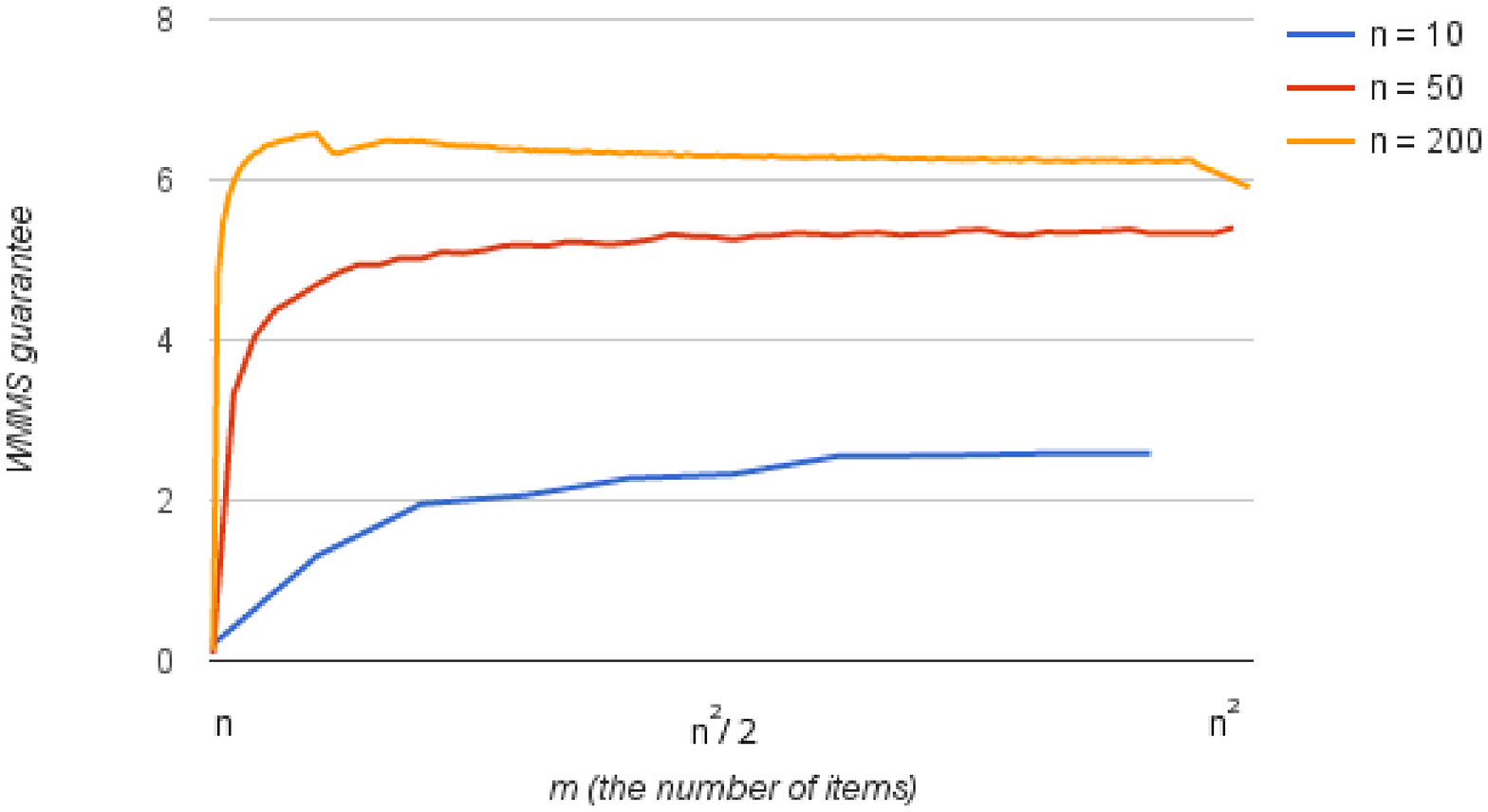}\vspace{-0.5cm}
	\end{center}
	\caption{The vertical line denotes the ratio of the valuation of the allocated set to the maxmin guarantee of the agents. The horizontal line shows the number of items varying from $n$ to $n^2$. Blue, red, and yellow polylines illustrate the performance of our algorithm for $n=10$, $n=50$, and $n=200$ respectively.}
	\label{gudgud2}
\end{figure}

%
We draw the valuation of the agents for the goods based on a collection of bids for eBay items publicly available at \href{http://cims.nyu.edu/~munoz/data/}{http://cims.nyu.edu/~munoz/data/}. More precisely, for $m$ items, we randomly choose $m$ different categories of goods from the dataset. Moreover, for every agent $\agent_i$ and item $\ite_j$, we set $\valu_i(\{\ite_j\})$ to a submitted bid for the corresponding category of item $\ite_j$ chosen uniformly at random. The bids vary from 0.01 to 113.63 and their mean is 6.57901. Moreover, the expected variance of the bids in every category is 200.513.

For an instance of the problem with $n$ agents and $m$ items, we run the experiments with 1000 different vector of entitlements drawn from the uniform distribution (and scaled up to satisfy $\sum \share_i = 1$). For every $n$ and $m$, we take the minimum $\WMMS$ guarantee obtained all 1000 runs, and show it in Figures \ref{gudgud1} and \ref{gudgud2}. We used heuristic algorithms to compute the maxmin shares and maxmin guarantees. Thus, our results are only \textit{lower bounds} to the actual $\WMMS$ guarantees. Nonetheless, the optimal guarantees are very close to the estimated ones.

Figures \ref{gudgud1} and \ref{gudgud2} illustrate the result of the runs for $n=10$, $n=50$, and $n=200$ respectively. Figure \ref{gudgud1} only depicts the $\WMMS$ guarantees for $m \in [n,2n]$ whereas in Figure \ref{gudgud2} the number of items varies from $n$ to $n^2$.

As shown in Figures \ref{gudgud1} and \ref{gudgud2}, the approximation guarantee improves as we increase the number of items. Moreover, unless $m$ is very close to $n$, a $\WMMS$ allocation exists in our experiments (notice that the guarantee is above 1 when $m$ is considerably larger that $n$).  
\section{Stochastic Setting}\label{random}
In Section \ref{noname} we presented a counterexample to show that no allocation better than $1/n$-$\WMMS$ can be guaranteed. However, the construction described in the counterexample is very unlikely to happen in the real settings.  Here, we show that $\WMMS$ allocation exists with high probability when a small randomness is allowed in the setting.

Considering stochastic settings is common in the fair allocation problems since many real-world instances can be modeled with random distributions~\cite{Budish:first,kurokawa2015can,amanatidis2015approximation,dickerson2014computational}. The general probabilistic model used in previous works is as follows: every agent $\agent_i$ has a probability distribution ${\cal D}_i$ over $[0,1]$ and for every item $\ite_j$, the value for $\valu_{i}(\{\ite_j\})$ is randomly sampled from ${\cal D}_i$. In~\cite{amanatidis2015approximation}, the existence of an $\MMS$ allocation is proved for the special case of ${\cal D}_i = U(0,1)$, where $U(0,1)$ is the standard uniform distribution with minimum $0$ and maximum $1$.  Kurokawa, Procaccia, and Wang~\cite{kurokawa2015can} considered the problem for arbitrary random distribution ${\cal D}_i$ with the condition that ${\mathbb V}[{\cal D}_i] \geq c$ for a positive constant $c$. A considerable part of the proof for the existence of an $\MMS$ allocation in~\cite{kurokawa2015can} is referred to~\cite{dickerson2014computational}, where the authors proved the existence of an envy-free allocation in the stochastic settings with arbitrary random distributions. 

In this section, we consider two different probabilistic models. Our first model is the same as~\cite{kurokawa2015can}, with the exception that we omit the restriction ${\mathbb V}[{\cal D}_i] \geq c$. We name this model as \emph{Stochastic Agents} model. In the second model, every item $\ite_i$ has a probability distribution ${\cal D}_i$ and for every agent $\agent_j$, value of $\valu_{j}(\{\ite_i\})$ is randomly drawn from ${\cal D}_i$. We choose the name \emph{Stochastic Items} for the second model. We believe that \textit{Stochastic Items} model is more realistic since the first model does not make any distinguish between the items. None of the previous works mentioned above considered this model. 

We leverage \emph{Hoeffding} inequality to prove the existence of $\WMMS$ allocation. Theorem \ref{hoeffding} states the general form of this inequality~\cite{hoeffding1963probability}. 
\begin{theorem}[General Form of  Hoeffding (1963)]
\label{hoeffding}
Let $X_1,X_2, \ldots, X_n$ be random variables bounded by the interval $[0,1]: 0 \leq X_i \leq 1$. We define the empirical mean of these variables by 
$\bar{X} = \frac{1}{n} (X_1+X_2+ \ldots + X_n)$.
Then, the following inequality holds:
\begin{equation}
\label{Hoeffding}
{\mathbb P}(|\bar{X} - E(\bar{X})| \geq t) \leq 2e^{-2nt^2}
\end{equation}
\end{theorem}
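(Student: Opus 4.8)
The plan is to prove Theorem~\ref{hoeffding} by the classical Chernoff--Cram\'er exponential moment method (which also requires the $X_i$ to be \emph{independent}; this hypothesis is implicit in the statement and is what makes the factorization step below valid). First I would fix $s>0$ and apply Markov's inequality to the nonnegative random variable $e^{s(\bar X - E(\bar X))}$, obtaining
\[
{\mathbb P}\big(\bar X - E(\bar X) \geq t\big) \leq e^{-st}\, {\mathbb E}\!\left[e^{s(\bar X - E(\bar X))}\right].
\]
Writing $\bar X - E(\bar X) = \frac1n\sum_{i=1}^n (X_i - E(X_i))$ and using independence, the expectation on the right factorizes as $\prod_{i=1}^n {\mathbb E}\big[e^{(s/n)(X_i - E(X_i))}\big]$, reducing everything to a single-variable moment-generating-function bound.

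The core estimate is Hoeffding's lemma: if $Y$ satisfies $a \leq Y \leq b$ and ${\mathbb E}[Y]=0$, then $\psi(\lambda) := \log {\mathbb E}[e^{\lambda Y}] \leq \lambda^2(b-a)^2/8$ for every $\lambda \in {\mathbb R}$. I would prove this by differentiating under the expectation: $\psi(0)=0$, $\psi'(0) = {\mathbb E}[Y] = 0$, and $\psi''(\lambda)$ equals the variance of $Y$ under the exponentially tilted law $d{\mathbb P}_\lambda \propto e^{\lambda Y}\, d{\mathbb P}$; since this tilted variable still lies in $[a,b]$, its variance is at most $(b-a)^2/4$. A second-order Taylor expansion of $\psi$ about $0$ then gives $\psi(\lambda)\leq \lambda^2(b-a)^2/8$. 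Applying this with $Y = X_i - E(X_i)$ (so $b-a\leq 1$) and $\lambda = s/n$ yields ${\mathbb E}\big[e^{(s/n)(X_i-E(X_i))}\big]\leq e^{s^2/(8n^2)}$, and hence ${\mathbb P}\big(\bar X - E(\bar X)\geq t\big) \leq e^{-st + s^2/(8n)}$.

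Finally I would optimize the free parameter: the exponent $-st + s^2/(8n)$ is minimized at $s = 4nt$, which gives ${\mathbb P}\big(\bar X - E(\bar X)\geq t\big) \leq e^{-2nt^2}$. Running the identical argument with each $X_i$ replaced by $-X_i$ (equivalently, $\bar X$ by $-\bar X$) gives ${\mathbb P}\big(\bar X - E(\bar X)\leq -t\big) \leq e^{-2nt^2}$, and a union bound over the two tails yields the stated two-sided bound ${\mathbb P}\big(|\bar X - E(\bar X)|\geq t\big)\leq 2e^{-2nt^2}$. The only genuinely nontrivial ingredient is Hoeffding's lemma --- in particular the variance bound $(b-a)^2/4$ for a $[a,b]$-valued variable and the Taylor argument; the Markov step, the factorization, and the choice of $s$ are all routine bookkeeping.
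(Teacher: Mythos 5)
Your proof is correct: it is the standard Chernoff--Cram\'er argument combined with Hoeffding's lemma (moment-generating-function bound via the tilted-measure variance estimate), with the optimization $s=4nt$ and the two-tail union bound carried out accurately, and you rightly point out that independence of the $X_i$ --- omitted in the statement as quoted --- is the hypothesis that makes the factorization step valid. The paper itself gives no proof of this theorem; it simply cites Hoeffding (1963), and your argument is precisely the classical one from that source, so there is nothing to reconcile between the two.
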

Regarding Theorem \ref{hoeffding}, let $X = n \bar{X} = \sum_i X_i$ and let $\mu = n \times E(\bar{X})$. By Inequality \eqref{Hoeffding}, we have:
\begin{equation}
\label{eq1}
{\mathbb P}(|X - \mu| \geq nt) \leq 2e^{-2nt^2}
\end{equation}
By setting $nt = \delta \mu$, we rewrite Equation \eqref{eq1} as: 
\begin{equation}
\label{eq2}
{\mathbb P}(|X - \mu| \geq \delta \mu) \leq 2e^{\frac{-2(\delta \mu)^2}{n}}
\end{equation}

\subsection{Model I: Stochastic Agents}
As mentioned before, in the first model we assume that every agent has a probability distribution ${\cal D}_i$ and for every item $\ite_j$, the value of $\valu_{i}(\{\ite_j\})$ is randomly sampled from ${\cal D}_i$. Furthermore, we suppose $\mu_i = {\mathbb E}({\cal D}_i)$. 
Throughout this section, we assume that $m \geq n$. This is w.l.o.g
\color{black}
, because for the case $m < n$, $\WMMS_i$ for every agent $\agent_i$ equals to zero. 
For the Stochastic Agents model, we state Theorem \ref{model1}.

\begin{theorem}
\label{model1}
Consider an instance of the fair allocation problem with unequal entitlements, such that the value of every item for every agent $\agent_i$ is randomly drawn from distribution ${\cal D}_i$. Furthermore, let $s = \min_i s_i$ and $\mu = \min_i \mu_i$. Then, for every $0<\epsilon<1$, there exists a value $m' = m'(\frac{1}{\share},\frac{1}{\mu},\frac{1}{\epsilon})$ (which means $m'$ is a function of $\frac{1}{\share},\frac{1}{\mu},\frac{1}{\epsilon}$) such that if $m \geq m'$, then almost surely a $(1-\epsilon)$-$\WMMS$ allocation exists.
\end{theorem}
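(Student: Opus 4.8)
The plan is to show that the naive \emph{proportional} allocation already does the job with high probability: fix an ordering of the $m$ items that does not look at the valuations, and give agent $\agent_i$ the block $\items_i$ consisting of $\lfloor s_i m\rfloor$ of them (dumping the at most $n-1$ leftover items into $\items_1$). The whole analysis rests on one deterministic fact and one concentration estimate. The deterministic fact is the upper bound $\WMMS_i\le s_i V_i(\items)$: for any partition $\langle A_1,\dots,A_n\rangle$ the weights $s_j$ sum to $1$, so $\min_j V_i(A_j)/s_j\le \sum_j s_j\cdot V_i(A_j)/s_j = V_i(\items)$, and multiplying by $s_i$ and taking the maximum over partitions gives the claim. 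The point is that we never need to estimate $\WMMS_i$ from below; it suffices to compare agent $\agent_i$'s realized value with $(1-\epsilon)s_i V_i(\items)$. (We may assume $m\ge n$, since otherwise every $\WMMS_i$ is $0$; in fact the $m'$ we produce will exceed $n$.)

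First I would record that, because the blocks are fixed before the valuations are drawn, for each $i$ the values $\{V_i(\{\ite\}):\ite\in\items_i\}$ are i.i.d.\ samples from ${\cal D}_i$, and so are $\{V_i(\{\ite\}):\ite\in\items\}$; all of these lie in $[0,1]$. Fixing $\delta=\epsilon/4$ and applying the form of Hoeffding's inequality in \eqref{eq2} once to the $|\items_i|$ variables summing to $V_i(\items_i)$ (mean $|\items_i|\mu_i$) and once to the $m$ variables summing to $V_i(\items)$ (mean $m\mu_i$), the two ``bad'' events
\begin{equation*}
V_i(\items_i) < (1-\delta)\,|\items_i|\,\mu_i \qquad\text{and}\qquad V_i(\items) > (1+\delta)\,m\,\mu_i
\end{equation*}
each have probability at most $2e^{-2\delta^2(sm-1)\mu^2}$, using $\mu_i\ge\mu$ and $|\items_i|\ge\lfloor s_i m\rfloor\ge sm-1$. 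Since $\sum_i s_i=1$ with every $s_i\ge s$, we have $n\le 1/s$, so a union bound over these $2n$ events leaves a total failure probability of at most $(4/s)\,e^{-2\delta^2(sm-1)\mu^2}$, which tends to $0$ as $m\to\infty$ and, in particular, drops below any prescribed level once $m$ passes a threshold that depends only on $1/s$, $1/\mu$, and $1/\epsilon$.

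On the complementary event I would finish as follows. For every agent $\agent_i$,
\begin{equation*}
V_i(\items_i)\ \ge\ (1-\delta)(s_i m-1)\mu_i\ =\ (1-\delta)\bigl(1-\tfrac{1}{s_i m}\bigr)s_i m\,\mu_i\ \ge\ (1-\delta)\bigl(1-\tfrac{1}{sm}\bigr)s_i m\,\mu_i,
\end{equation*}
while the deterministic bound together with $V_i(\items)\le(1+\delta)m\mu_i$ gives $(1-\epsilon)\WMMS_i\le(1-\epsilon)(1+\delta)\,s_i m\,\mu_i$. Hence it suffices that $(1-\delta)\bigl(1-\tfrac{1}{sm}\bigr)\ge(1-\epsilon)(1+\delta)$; with $\delta=\epsilon/4$ this holds as soon as $\tfrac{1}{sm}\le\epsilon/4$, i.e.\ $m\ge 4/(s\epsilon)$ (which also forces $sm>1$, so the blocks are nonempty). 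Taking $m'$ to be the maximum of $4/(s\epsilon)$ and the probability threshold from the previous paragraph — both functions of $1/s$, $1/\mu$, $1/\epsilon$ only — we conclude that for $m\ge m'$ the proportional allocation is a $(1-\epsilon)$-$\WMMS$ allocation with probability tending to $1$.

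The main obstacle is bookkeeping rather than a genuine difficulty: one must make sure $m'$ depends only on $1/s$, $1/\mu$, $1/\epsilon$ and not on $n$ — which is exactly why the union bound is absorbed into the slack $n\le 1/s$ — and one must keep the partition oblivious to the realized valuations so that the values inside each block stay i.i.d.\ and Hoeffding applies. The conceptual shortcut that keeps everything short is bounding $\WMMS_i\le s_i V_i(\items)$ from above and never trying to pin it down from below.
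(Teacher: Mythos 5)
Your proposal is correct and follows essentially the same route as the paper: a value-oblivious proportional allocation of $\lfloor \share_i m\rfloor$ items per agent, Hoeffding concentration of the allocated and total values, and the upper bound $\WMMS_i \le \share_i V_i(\items)$ to avoid lower-bounding $\WMMS_i$. The only cosmetic differences are that you concentrate $V_i(\items)$ directly (the paper concentrates the unassigned value $X_i'$ separately) and you fix $\delta=\epsilon/4$ with a union bound absorbed via $n\le 1/\share$, whereas the paper calibrates each failure probability to $1/(2mn)$; both yield the same conclusion.
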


In the rest of this section, we prove Theorem \ref{model1}. Consider the algorithm that allocates $t_i = \lfloor m \share_i \rfloor$ items to every agent $a_i$. We know that the value of item $\ite_i$ for agent $\agent_j$ is randomly sampled from ${\cal D}_i$. From the point of view of the algorithm, it trivially does not matter whether the value of items are sampled after the allocation or before the allocation. Thus, we can suppose that the value of every item is sampled after the allocation of items. 

For now, we know that $t_i = \lfloor m \share_i \rfloor$ number of items are assigned to $\agent_i$. Argue that for every $\epsilon > 0$, there exists a value $m'$, such that for every $m \geq m'$, $t_i \geq m \share_i (1-\epsilon)$. In Lemma \ref{b1} we bound the value of $m'$ in terms of $\share_i$ and $\epsilon$.

\begin{lemma}
\label{b1}
For every $m > \frac{1}{\epsilon \share_i}$, $\lfloor m \share_i \rfloor \geq m\share_i(1-\epsilon)$. 
\end{lemma}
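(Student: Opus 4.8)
The plan is to reduce the statement to the elementary inequality $\lfloor x\rfloor > x-1$, valid for every real $x$. Applying it with $x = m\share_i$ immediately gives $\lfloor m\share_i\rfloor > m\share_i - 1$, so it suffices to check that the hypothesis $m > \frac{1}{\epsilon\share_i}$ forces $m\share_i - 1 \ge m\share_i(1-\epsilon)$; chaining the two bounds then yields the claim (in fact with a strict inequality, which of course implies the stated $\ge$). I would also remark at the outset that $\share_i>0$, since the entitlements are positive and sum to $1$, so that multiplying or dividing inequalities by $\share_i$ is harmless.

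For the remaining step I would simply rearrange: $m\share_i - 1 \ge m\share_i(1-\epsilon)$ is equivalent to $\epsilon m \share_i \ge 1$, i.e.\ to $m \ge \frac{1}{\epsilon\share_i}$, which is a strictly weaker condition than the hypothesis $m > \frac{1}{\epsilon\share_i}$. Putting the displays together, $\lfloor m\share_i\rfloor > m\share_i - 1 \ge m\share_i(1-\epsilon)$, as desired. There is no genuine obstacle here — the only thing to be careful about is not to divide by $\share_i$ before noting its positivity, and to keep track of the (harmless) mismatch between the strict hypothesis and the non-strict conclusion. This is exactly the bound needed to conclude $t_i = \lfloor m\share_i\rfloor \ge m\share_i(1-\epsilon)$ in the Stochastic Agents analysis.
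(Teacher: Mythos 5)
Your proof is correct and follows essentially the same route as the paper: both start from the floor bound $\lfloor m\share_i\rfloor \geq m\share_i - 1$ and then use the hypothesis $m > \frac{1}{\epsilon\share_i}$ to absorb the $-1$ into the factor $(1-\epsilon)$. The only cosmetic difference is that the paper rewrites $m\share_i-1$ as $m\share_i\bigl(1-\frac{1}{m\share_i}\bigr)$ and bounds $\frac{1}{m\share_i}<\epsilon$, while you rearrange the target inequality directly; these are the same computation.
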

\begin{proof}
We know $
\lfloor m \share_i \rfloor \geq m \share_i - 1 = m \share_i (1- \frac{1}{m \share_i})
$. If $m > \frac{1}{\epsilon \share_i}$, then 
$
\lfloor m \share_i \rfloor  \geq m\share_i(1-\frac{1}{\frac{1}{\epsilon e_i}e_i}) = m \share_i(1-\epsilon)
$.
This, completes the proof.
\end{proof}

For the rest of the proof, suppose that $m > \frac{1}{\epsilon \share_i}$. 
Let $X_i$ be the variable indicating total value of items allocated to $\agent_i$. Note that $\mathbb E(X_i) = t_i \mu_i$. Regarding Equation \eqref{eq2}, we have:
$$
{\mathbb P}(|X_i - t_i \mu_i| \geq \delta t_i \mu_i) \leq 2e^{\frac{-2 (\delta t_i \mu_i )^2}{t_i}}
$$
We want to choose $\delta$ such that ${\mathbb P}(|X_i - t_i \mu_i| \geq \delta t_i \mu_i) \leq \frac{1}{2mn}$. We have: 
$$
2e^{\frac{-2 (\delta t_i \mu_i )^2}{t_i}} \leq \frac{1}{2mn} \Rightarrow -2 (\delta t_i \mu_i )^2 \leq t_i \ln \frac{1}{4mn} \Rightarrow  \delta \geq \sqrt{ \frac{t_i \ln 4mn}{2  (t_i \mu_i )^2} }
$$

Regarding the facts that $t_i \geq m\share_i(1-\epsilon)$ and $m \geq n$, we have:
$$
\sqrt{ \frac{t_i \ln 4mn}{2  (t_i \mu_i )^2} } \leq \sqrt{\frac{\ln2 + \ln m}{ \share_i m {\mu_i} ^2(1-\epsilon)} } 
$$
Therefore, it's enough to choose $\delta$ such that
\begin{equation}
\label{delta1}
\delta \geq \sqrt{\frac{\ln 2 + \ln m}{ \share_i m {\mu_i} ^2(1-\epsilon)} }. 
\end{equation}

Now, let $t'_i $ be the number of items that are not assigned to $\agent_i$. Since $t_i + t'_i = m$, regarding the fact that $t_i \geq m\share_i(1-\epsilon)$, we have $t'_i \leq m - m\share_i(1-\epsilon) $, which means $t'_i \leq m + m\share_i(\epsilon-1)$. On the other hand, $t'_i \geq m(1-\share_i)$. Also, let $X'_i$ be the variable indicating total value of the items that are not allocated to $\agent_i$. By the same deduction as $t_i$ for $t'_i$ we have:
$$
 {\mathbb P}(|X'_i - t'_i \mu_i| \geq \delta' t'_i \mu_i) \leq 2e^{\frac{-2 (\delta' t'_i \mu_i )^2}{t'_i}}.
$$
Let $\delta'$ be the value that ${\mathbb P}(|X_i - t'_i \mu_i| \geq \delta t_i' \mu_i)< \frac{1}{2mn}$. We have: 
$$
2e^{\frac{-2 (\delta' t'_i \mu_i )^2}{t'_i}} \leq \frac{1}{2mn} \Rightarrow (\delta' t'_i \mu_i )^2 \geq \frac{t'_i\ln{4mn}}{2}
$$
\begin{equation}
\label{eqprin}
\delta' \geq \sqrt{ \frac{ \ln 4mn}{2   t'_i \mu_i ^2} } 
\end{equation}

Thus, it's enough to choose $\delta$ in a way that Inequality \eqref{eqprin} holds. Regarding the facts that $m>n$ and $t'_i \geq m(1-\share_i)$, 
$$
\sqrt{ \frac{ \ln 4mn}{2   t'_i \mu_i ^2} }  \leq \sqrt{ \frac{\ln 2 + \ln m}{ m(1-\share_i) {\mu_i} ^2 } }.
$$
Therefore, it's enough to choose $\delta'$ in a way that 
\begin{equation}
\label{delta2}
\delta' \geq  \sqrt{ \frac{\ln 2 + \ln m}{ m(1-\share_i) {\mu_i} ^2 } } 
\end{equation}

Now, suppose that both Inequalities \eqref{delta1} and \eqref{delta2} are held. Considering $S_i$ as the set of items assigned to $\agent_i$, with the probability of at least $1-(\frac{1}{2mn} + \frac{1}{2mn}) = 1 - \frac{1}{mn}$ we have:
$$
\frac{\valu_i(S_i)}{\valu_i(\items)} = 
\frac{(1-\delta)t_i \mu_i}{(1+\delta)t_i\mu_i + (1+\delta')t'_i\mu_i}
$$
It is easy to show that, there always exist an $m'_i$ such that for all $m \ge m'_i$ both Inequalities \eqref{delta1} and \eqref{delta2} hold for $\delta = \epsilon$ and $\delta' = \epsilon$. Regarding this, we have:
\begin{align*}
\frac{\valu_i(S_i)}{\valu_i(\items)} &\geq \frac{(1-\epsilon)t_i \mu_i}{(1+\epsilon)t_i\mu_i + (1+\epsilon)t'_i\mu_i}\\
&=\frac{(1-\epsilon)t_i}{(1+\epsilon)m}\\
&\geq \frac{(1-\epsilon)m\share_i(1-\epsilon)}{(1+\epsilon)m}\\
& = \frac{(1-\epsilon)^2}{1+\epsilon} \share_i = \frac{(1+\epsilon)^2-4\epsilon}{1+\epsilon} \share_i\\
& = (1 +\epsilon) \share_i - \frac{4\epsilon}{(1+\epsilon)}\share_i \\
& \geq (1 +\epsilon) \share_i -4\epsilon\share_i = (1-3\epsilon) \share_i
\end{align*}
Therefore, with the probability at least $1 - \frac{1}{mn}$ we have:
\begin{equation}
\label{lastin}
\forall_{\agent_i \in \agents} \qquad \frac{\valu_i(S_i)}{\valu_i(\items)} \geq (1-3\epsilon) \share_i
\end{equation}
Now, suppose that the Inequality \ref{lastin} holds for every agent $\agent_i$, with probability at least $1 - \frac{1}{mn}$. Considering all the agents, with the probability at least $(1-\frac{1}{mn})^n \geq 1 - \frac{n}{nm} = 1-\frac{1}{m}$, value of $\frac{\valu_i(S_i)}{\valu_i(\items)}$ for every agent $a_i$ is at least $\share_i(1-3\epsilon)$. Regarding the fact that $\WMMS_i \leq \share_i \valu_i(\items)$, we have 
$$
\forall_{\agent_i \in \agents} \qquad \valu_i(S_i) \geq \WMMS_i (1-3\epsilon)
$$
This completes the proof.
\bibliographystyle{abbrv}

\subsection{Model II: Stochastic Items}
As mentioned, in Stochastic Items model, every item $\ite_i$ has a probability distribution ${\cal D}_i$ and the value of every agent $\agent_j$ for item $\ite_i$ is randomly chosen from ${\cal D}_i$. For this model, we prove Theorem \ref{model2}. The theorem states that for large enough $m$, almost surely a $(1-\epsilon)$-$\WMMS$ allocation exists. 
\begin{theorem}
\label{model2}
Suppose that for every agent $\agent_i$, ${\mathbb E}({\cal D}_i) > c$ for a non-negative constant $c$. Then for all $ 0<\epsilon<1$, there exists $m'=m'(c,\epsilon,n,\share_1,...,\share_n)$ such that if $m \ge m'$, then, almost surely, $(1-\epsilon)$-$\WMMS$ allocation exists. 
\end{theorem}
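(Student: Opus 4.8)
The plan is to exhibit a single, value-independent allocation and prove it is $(1-\epsilon)$-$\WMMS$ with probability tending to $1$. Write $\mu_j = {\mathbb E}({\cal D}_j)$ for the mean value of item $\ite_j$ (in the Stochastic Items model each distribution is attached to an item, and the hypothesis gives $\mu_j > c$; since every ${\cal D}_j$ is supported on $[0,1]$ we also have $\mu_j \le 1$). Set $M = \sum_{j=1}^m \mu_j$, so $cm \le M \le m$. The structural feature that distinguishes this model from Model~I is that $\mu_j$ depends only on the item, not on the agent who draws it; hence ${\mathbb E}[\valu_i(\items)] = \sum_{j=1}^m \mu_j = M$ is \emph{the same} for every agent $\agent_i$. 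Because $\WMMS_i \le \share_i \valu_i(\items)$, it suffices to hand each agent a bundle whose realized value is at least $(1-\epsilon)\share_i \valu_i(\items)$.

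First I would fix, before any values are revealed, a partition $G_1,\dots,G_n$ of $\items$ that depends only on the means $\mu_j$, and award bundle $G_i$ to agent $\agent_i$. Since the agents draw independently, freezing the partition in advance keeps both $\valu_i(G_i)$ and $\valu_i(\items)$ as sums of independent $[0,1]$ variables, decoupling the partition from the realizations. The partition is chosen so that group $i$ carries an $\approx \share_i$ share of the total mean-mass $M$, via water-filling: add items to group $1$ until its running mean-sum first reaches $\share_1 M$, then fill group $2$ up to $\share_2 M$, and so on, giving the remainder to group $n$. Because $\mu_j \le 1$, each of the first $n-1$ groups overshoots its target by less than $1$, so $\sum_{j\in G_i}\mu_j \ge \share_i M - (n-1)$ for every $i$ (for $m$ large the procedure does not exhaust items prematurely). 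For $m$ large enough this yields $\sum_{j\in G_i}\mu_j \ge (1-\epsilon/3)\share_i M$. Note also that $\mu_j \le 1$ forces $|G_i| \ge \sum_{j\in G_i}\mu_j$, so each group grows unboundedly with $m$, which is what lets the concentration step bite.

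With the partition fixed, I would invoke Hoeffding in the form of Inequality~\eqref{eq2} twice per agent, with $\delta = \epsilon/3$: once to show $\valu_i(G_i) \ge (1-\delta)\sum_{j\in G_i}\mu_j$ and once to show $\valu_i(\items) \le (1+\delta)M$. Using $|G_i| \le \tfrac{1}{c}\sum_{j\in G_i}\mu_j$ and $M \ge cm$, the exponents in \eqref{eq2} are at most $-2c\delta^2(\share_i M - (n-1))$ and $-2\delta^2 c^2 m$, so each failure probability drops below $\tfrac{1}{2mn}$ once $m \ge m'(c,\epsilon,n,\share_1,\dots,\share_n)$ (the threshold controlled by $\share_{\min}=\min_i \share_i$ through $\share_i M \ge \share_{\min} c m$). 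A union bound over the $2n$ events leaves total failure probability at most $\tfrac{1}{m}$. On the complementary event, for every agent,
$$\valu_i(G_i) \ge (1-\delta)\sum_{j\in G_i}\mu_j \ge (1-\delta)(1-\epsilon/3)\share_i M \ge \frac{(1-\epsilon/3)^2}{1+\epsilon/3}\,\share_i \valu_i(\items) \ge (1-\epsilon)\share_i\valu_i(\items) \ge (1-\epsilon)\WMMS_i,$$
where the penultimate step uses the elementary inequality $(1-\epsilon/3)^2 \ge (1-\epsilon)(1+\epsilon/3)$. Thus a $(1-\epsilon)$-$\WMMS$ allocation exists with probability at least $1-\tfrac1m \to 1$.

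I expect the partition step, not the concentration step, to be the crux. In Model~I every item looks identical to a fixed agent, so a count-balanced split $t_i = \lfloor m\share_i\rfloor$ automatically delivers the right expected value; here a fixed agent sees independent but \emph{non-identically distributed} items, so an arbitrary count-balanced partition could load one agent with low-mean items and leave $\sum_{j\in G_i}\mu_j$ far from $\share_i M$. The real work is to argue that water-filling simultaneously (i) matches each target mean-mass up to an additive $O(n)$ error and (ii) keeps every $|G_i|$ large enough for Hoeffding, the latter resting on the two-sided bound $c < \mu_j \le 1$. Once both are in hand, the remainder is the same bookkeeping as in the proof of Theorem~\ref{model1}.
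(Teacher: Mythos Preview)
Your argument is correct and takes a genuinely different route from the paper. The paper proceeds in two decoupled steps: first it proves a deterministic structural lemma (Lemma~\ref{lem1}), via the LP relaxation and pseudoforest rounding of \cite{bezakova2005allocating}, guaranteeing an allocation in which every agent obtains at least $\WMMS_i - \max_j \valu_i(\{\ite_j\}) \ge \WMMS_i - 1$; second, it uses a \emph{randomized} partition (each item independently given to agent $\agent_j$ with probability $\share_j$) together with Hoeffding to show that $\WMMS_i \ge 1/\epsilon$ with high probability, which makes the additive loss of $1$ an at most $\epsilon$-fraction. You instead construct a single explicit partition up front---water-filling on the item means---and apply Hoeffding directly to the realized bundle values and to $\valu_i(\items)$, exploiting that in this model ${\mathbb E}[\valu_i(\items)]=M$ is agent-independent. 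Your approach is more elementary (no LP machinery, no rounding) and fully constructive, and it makes transparent why the Stochastic Items model differs from Model~I: you must balance mean-mass rather than item count, which is exactly what water-filling does. The paper's approach, by contrast, isolates a reusable deterministic fact (Lemma~\ref{lem1} holds for arbitrary valuations, not just random ones) and confines the probabilistic reasoning to lower-bounding $\WMMS_i$; this separation is conceptually clean but costs the extra LP/pseudoforest argument.
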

In the rest of the section, we prove Theorem \ref{model2}. First, in Lemma \ref{lem1} we prove the existence of an allocation that assigns to every agent $\agent_i$, a set of items with value at least $\WMMS_i - M_i$, where $M_i = \max_j(\valu_i(\{\ite_j\}))$. The idea to prove this fact is inspired by~\cite{bezakova2005allocating}. Argue that we can formulate the allocation problem with unequal entitlements as the following Integer program:
\begin{equation}
\label{LP1}
\begin{array}{ll@{}ll}
& & \displaystyle\sum_{\ite_j \in \items} \valu_i(\{\ite_j\}) \cdot f_{i,j} \geq \valu(\items) \cdot \share_i & \displaystyle\forall_{\agent_i \in \agents}\\

&    &\displaystyle\sum_{\agent_i \in \agents} f_{i,j} = 1 &\displaystyle\forall_{\ite_j \in \items}\\
                 &                                                &f_{i,j} \in \{0,1\}
\end{array}
\end{equation}
In IP\eqref{LP1}, variable $f_{i,j}$ determines whether $\ite_j$ is assigned to agent $\agent_i$ or not. Considering the fact that $\valu_i(\items)\cdot \share_i$ is a trivial upper bound on $\WMMS_i$, any solution to IP\ref{LP1} is a feasible solution to the assignment problem with unequal entitlements. 
 By relaxing the second and the third condition, we can convert IP\ref{LP1} to LP\ref{LP2}:
\begin{equation}
\label{LP2}
\begin{array}{ll@{}ll}
& & \displaystyle\sum_{\ite_j \in \items} \valu_i(\{\ite_j\}) \cdot f_{i,j} \geq \valu(\items) \cdot \share_i & \displaystyle\forall_{\agent_i \in \agents}\\

&    &\displaystyle\sum_{\agent_i \in \agents} f_{i,j} \leq 1 &\displaystyle\forall_{\ite_j \in \items}\\
                 &                                                & f_{i,j} \ge 0
\end{array}
\end{equation}
For every feasible solution $\cal A$ to LP\ref{LP2}, we construct the bipartite graph $G_{\cal A} \langle I,J,E \rangle$ where $I=\{1,2,...,n\}$ and $J=\{1,2,..,m\}$ correspond to the set of players and items, respectively. An edge $i,j$ is included if $f_{i,j} > 0$. Then by the same way used by~\cite{bezakova2005allocating}, we will prove the following theorem.
\begin{lemma}
\label{special}
There exists a solution $\cal A'$ to LP\ref{LP2}, such that $G_{\cal A}$ is a pseudoforest. (each component of the graph is either a tree or a tree with an extra edge)
\end{lemma}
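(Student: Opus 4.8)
The plan is to take $\mathcal A'$ to be a \emph{basic feasible (vertex) solution} of LP\ref{LP2} and to show that the support graph of \emph{any} such solution is a pseudoforest. LP\ref{LP2} is feasible --- the proportional fractional assignment $f_{i,j}=\share_i$ (for all $i,j$) satisfies $\sum_{\ite_j}\valu_i(\{\ite_j\})f_{i,j}=\share_i\valu_i(\items)$, $\sum_{\agent_i}f_{i,j}=\sum_i\share_i=1$, and $f_{i,j}\ge 0$ --- and its feasible region lies in the nonnegative orthant, hence is a pointed polyhedron and has a vertex $\mathcal A'=(f_{i,j})$. Let $G_{\mathcal A'}=\langle I,J,E\rangle$ be its support graph, i.e. $(i,j)\in E$ iff $f_{i,j}>0$. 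Note that $|E|$ equals the number of strictly positive variables of $\mathcal A'$.

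The heart of the argument is a \emph{local} counting bound, applied component by component (this is the Bez\'akov\'a--Dani sparsity argument; see~\cite{bezakova2005allocating}). Fix a connected component $C$ of $G_{\mathcal A'}$, with agent vertices $S\subseteq I$ and item vertices $T\subseteq J$, so $C$ has $|S|+|T|$ vertices and $|E(C)|$ edges. I claim $|E(C)|\le|S|+|T|$. Suppose not. Consider the linear system, in the $|E(C)|$ unknowns $\{f_{i,j}:(i,j)\in E(C)\}$, consisting of the agent rows indexed by $i\in S$ and the item rows indexed by $j\in T$, each restricted to the coordinates in $E(C)$; there are at most $|S|+|T|<|E(C)|$ of these, so the system has a nonzero solution $\delta$ supported on $E(C)$. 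Extend $\delta$ by zero to all $nm$ coordinates.

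Now check that for all sufficiently small $\varepsilon>0$, both $f+\varepsilon\delta$ and $f-\varepsilon\delta$ are feasible for LP\ref{LP2}: the agent constraints for $i\in S$ and item constraints for $j\in T$ are unchanged because $\delta$ lies in their kernel on $E(C)$; every agent constraint with $i\notin S$ and item constraint with $j\notin T$ is untouched because $\delta$ vanishes on all of its variables; and each non-negativity constraint is preserved, since $\delta$ is supported precisely on coordinates where $f_{i,j}>0$. As $\delta\neq 0$, this exhibits $f$ as the midpoint of two distinct feasible points, contradicting that $\mathcal A'$ is a vertex. Hence $|E(C)|\le|S|+|T|=|V(C)|$ for every component $C$; a connected graph with at most as many edges as vertices has cyclomatic number $|E(C)|-|V(C)|+1\le 1$, i.e. is a tree or a tree plus one edge. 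Therefore $G_{\mathcal A'}$ is a pseudoforest, which proves Lemma~\ref{special}.

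The step I expect to require the most care is the perturbation/feasibility bookkeeping in the second-to-last paragraph: correctly identifying which constraints of LP\ref{LP2} are affected when one moves along $\delta$, and checking that slack constraints remain slack and tight constraints remain tight. It is also worth flagging that the argument must be run \emph{per component}: the global bound ``number of positive variables $\le n+m$'' alone only controls the total number of independent cycles and does not rule out a single component with two or more cycles, so the localized kernel argument is essential to reach the pseudoforest conclusion.
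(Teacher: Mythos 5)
Your proof is correct and follows essentially the same route as the paper: take a vertex (corner) of LP\ref{LP2}, observe the sparsity of its support, and conclude the pseudoforest structure via the argument of \bezakovaallocating~\cite{bezakova2005allocating}. In fact your write-up supplies the detail the paper defers to that citation --- the paper only records the global bound of at most $m+n$ nonzero variables, and your per-component kernel/perturbation step (together with your correct observation that the global edge count alone does not rule out a component with two independent cycles) is exactly what is needed to complete the argument.
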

\begin{proof}
We have $mn+m+n$ inequalities defining the polytope of feasible solutions of LP\ref{LP2}. We have $mn$ variables $f_{i,j}$, therefore every solution which is located in the corner of polytope satisfies at least $mn$ inequalities as equalities and there will be at most $m+n$ non-zero variables in these solutions. By the same method used by~\cite{bezakova2005allocating}, it is clear to show that if $\cal A'$ is corresponding solution to a corner of polytope, then $G_{\cal A'}$ is a pseudoforest.
\end{proof}

We call the solution with the property defined in Lemma \ref{special} as constrained solution. In~\cite{bezakova2005allocating} it is shown that every constrained solution for LP\ref{LP2}, can be converted to a solution for IP\ref{LP1}, such that every agent $a_i$ loses at most one item $b_j$ where $f_{i,j}>0$.

\begin{lemma}
\label{lem1}
There exists an allocation in which every agent $\agent_i$ gets at least $\WMMS_i - \max_j V_i(\{b_j\})$.
\end{lemma}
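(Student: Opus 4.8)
The plan is to combine the structural statement of Lemma~\ref{special} with the rounding scheme of \bezakovaallocating~\cite{bezakova2005allocating}, and then bound the value an agent loses during the rounding. First I would verify that LP\eqref{LP2} is feasible: the proportional fractional assignment $f_{i,j}=\share_i$ for all $\agent_i,\ite_j$ satisfies the packing constraints $\sum_{\agent_i\in\agents} f_{i,j}=\sum_i\share_i=1$, and satisfies the covering constraints because $\sum_{\ite_j\in\items}\valu_i(\{\ite_j\})f_{i,j}=\share_i\,\valu_i(\items)\ge\WMMS_i$, using the trivial bound $\WMMS_i\le\share_i\valu_i(\items)$ remarked on after IP\eqref{LP1}. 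Hence the polytope of LP\eqref{LP2} is nonempty and has a vertex; I take such a vertex (corner) solution, which by Lemma~\ref{special} may be assumed to be a constrained solution $\mathcal{A}'$ whose support graph $G_{\mathcal{A}'}$ is a pseudoforest.

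Next I would apply the rounding of \bezakovaallocating: a constrained solution can be converted into an integral solution of IP\eqref{LP1} in which every item is assigned to exactly one agent, every item $\ite_j$ with $f_{i,j}=1$ stays with $\agent_i$, and each agent $\agent_i$ gives up at most one of the items $\ite_j$ it held fractionally ($0<f_{i,j}<1$). The mechanism is to root each tree of the pseudoforest at an agent, assign each item to the agent immediately above it, and handle the single extra edge of a unicyclic component separately, so that the ``loses at most one item'' bookkeeping is per agent. I would then quantify the loss: if $\ite_{j^\star}$ is the (at most one) item surrendered by $\agent_i$, its fractional contribution was $\valu_i(\{\ite_{j^\star}\})f_{i,j^\star}\le\valu_i(\{\ite_{j^\star}\})\le\max_j\valu_i(\{\ite_j\})$, while every retained item with $f_{i,j}<1$ only gains value when its coefficient is rounded up to $1$. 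Therefore the integral value received by $\agent_i$ is at least
\[
\sum_{\ite_j\in\items}\valu_i(\{\ite_j\})f_{i,j}-\max_j\valu_i(\{\ite_j\})\;\ge\;\share_i\valu_i(\items)-\max_j\valu_i(\{\ite_j\})\;\ge\;\WMMS_i-\max_j\valu_i(\{\ite_j\}),
\]
which is the claimed bound.

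The routine parts are the LP feasibility check and the arithmetic of the loss bound. The substantive work --- existence of a pseudoforest vertex solution and its rounding with the per-agent ``at most one lost item'' guarantee --- is exactly what Lemma~\ref{special} and \cite{bezakova2005allocating} supply, so the only thing to be careful about is to feed the rounding the vertex solution of Lemma~\ref{special} rather than an arbitrary feasible point, and to make sure the single surrendered item is charged against $\max_j\valu_i(\{\ite_j\})$ and not against $\share_i\valu_i(\items)$. I do not anticipate a real obstacle beyond stating the rounding step precisely.
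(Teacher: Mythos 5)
Your proposal is correct and follows essentially the same route as the paper: check feasibility of LP\eqref{LP2} via the proportional fractional solution, invoke Lemma~\ref{special} to get a vertex (pseudoforest) solution, and apply the rounding of \bezakovaallocating~\cite{bezakova2005allocating} so each agent loses at most one fractionally held item, whose value is at most $\max_j \valu_i(\{\ite_j\})$. You even state the feasible point correctly as $f_{i,j}=\share_i$ (the paper writes $f_{i,j}=1/\share_i$, which is evidently a typo since the packing constraint requires $\sum_i f_{i,j}=1$), so no changes are needed.
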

\begin{proof}
The polytope of of feasible solutions of LP\ref{LP2} is non-empty, because it has at least one solution which is, $f_{i,j}=\dfrac{1}{\share_i}$ for every agent $a_i$ and item $b_j$. Therefore there exists a constrained solution for LP\ref{LP2}, and using the same method in \cite{bezakova2005allocating} this solution can be converted to a solution for IP\ref{LP1}, such that every agent loses at most one item.
\end{proof}

Proofs of Lemmas \ref{special} and \ref{lem1} are omitted and included in the full version.\\

Now we show that there exists $m'_i= m'_i(c,\epsilon,n,\share_1,...,\share_n)$, such that if $m \ge m'_i$, then, $\WMMS_i \ge \dfrac{1}{\epsilon}$ with the probability at least $1-\dfrac{1}{m n}$. Therefore, whenever $m \ge \max(m'_1,m'_2,...,m'_n)$, with the probability at least $(1-\dfrac{1}{mn})^n \geq 1 - \dfrac{n}{nm} = 1-\dfrac{1}{m}$, for every agent $a_i$, $\WMMS_i \ge \dfrac{1}{\epsilon}$. According to Lemma \ref{lem1}, there is an allocation in which every agent gets at least
\begin{align*}
\WMMS_i - \max_j V_j(\{b_j\}) &\ge \WMMS_i -1 \\ 
&\ge \WMMS_i ( 1- \dfrac{1}{\WMMS_i})\\ &\ge \WMMS_i(1- \epsilon) 
\end{align*}
Therefore, a $(1-\epsilon)-\WMMS$ allocation is guaranteed to exist with the probability at least $1-\dfrac{1}{m}$.
\begin{lemma}
Suppose that for every agent $\agent_j$, ${\mathbb E}({\cal D}_j) > c$ for a non-negative constant $c$. Then for all $ 0<\epsilon<1$, there exists $m'_i=m'_i(c,\epsilon,n,\share_1,...,\share_n)$ such that if $m \ge m'_i$, $\WMMS_i \ge \dfrac{1}{\epsilon}$ with the probability at least $1- \dfrac{1}{mn}$.
\end{lemma}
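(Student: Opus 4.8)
The plan is to lower bound $\WMMS_i$ by exhibiting one explicit partition of $\items$ whose weighted minimum value (from $\agent_i$'s viewpoint) is almost the ideal proportional value $\share_i\valu_i(\items)$, and then to show that $\valu_i(\items)$ itself is $\Omega(m)$ with high probability. The deterministic core is the claim: for any valuation $\valu_i$ with $\valu_i(\{\ite_j\})\le 1$ for all $j$, there is a partition $\langle A_1,\dots,A_n\rangle$ with $\min_j \valu_i(A_j)/\share_j \ge \valu_i(\items) - 1/\share_{\min}$, where $\share_{\min}=\min_j \share_j$, and hence $\WMMS_i \ge \share_i\bigl(\valu_i(\items)-1/\share_{\min}\bigr)$.

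To prove the claim I would run the natural greedy weighted list-scheduling procedure: first seed each of the $n$ bundles with one of the first $n$ items (possible since we may assume $m\ge n$), then add each remaining item to the bundle $j$ currently minimizing $\valu_i(A_j)/\share_j$. Let $R_j=\valu_i(A_j)/\share_j$ at termination, $R^*=\min_j R_j$, and $R_{\max}=\max_j R_j$. Looking at the last item placed into a maximizing bundle $j'$: just before that item was added the ratio of $j'$ was at most the ratio of the minimizing bundle at that time, which is at most its final ratio $R^*$, and the single item raises it by at most $1/\share_{j'}\le 1/\share_{\min}$; if $j'$ received only its seed item the bound $R_{j'}\le 1/\share_{\min}$ is immediate. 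Either way $R_{\max}\le R^* + 1/\share_{\min}$. Combining with $\valu_i(\items)=\sum_j \share_j R_j \le R^* + 1/\share_{\min}$ yields $R^*\ge \valu_i(\items)-1/\share_{\min}$, and then $\WMMS_i\ge\share_i R^*$.

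The probabilistic step is routine. Writing $\valu_i(\items)=\sum_{j=1}^m \valu_i(\{\ite_j\})$ as a sum of $m$ independent variables in $[0,1]$ with means exceeding $c$, so that $\mathbb E[\valu_i(\items)]> cm$, Hoeffding's inequality (Theorem~\ref{hoeffding}, applied with $m$ variables and $t=c/2$, then scaled by $m$) gives $\mathbb P\bigl(\valu_i(\items)< cm/2\bigr)\le 2e^{-mc^2/2}$. Thus with probability at least $1-2e^{-mc^2/2}$ we get $\WMMS_i\ge \share_i\bigl(cm/2 - 1/\share_{\min}\bigr)$. Finally I would take $m_i'$, a function of $c,\epsilon,n,\share_1,\dots,\share_n$, large enough that simultaneously (i) $\share_i(cm/2 - 1/\share_{\min})\ge 1/\epsilon$, i.e.\ $m\ge \tfrac{2}{c}\bigl(\tfrac1{\share_i\epsilon}+\tfrac1{\share_{\min}}\bigr)$, and (ii) $2e^{-mc^2/2}\le 1/(mn)$ (and $m_i'\ge n$); then for all $m\ge m_i'$ the bound $\WMMS_i\ge 1/\epsilon$ holds with probability at least $1-1/(mn)$.

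The main obstacle is making the deterministic structural claim watertight — the weighted list-scheduling argument, in particular handling ties, possibly-zero item values, and guaranteeing every bundle is non-empty so that ``the last item added to bundle $j$'' is well defined; the probabilistic part is essentially the same Hoeffding application already used in the analysis of Model~I.
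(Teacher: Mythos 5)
Your proof is correct, but it takes a genuinely different route from the paper. The paper lower-bounds $\WMMS_i$ by exhibiting a \emph{random} partition: each item is assigned to bundle $j$ independently with probability $\share_j$, Hoeffding is applied separately to each bundle value $X_j=\sum_k X_{j,k}$ (where the randomness mixes the item values with the auxiliary assignment), and a union bound over the $n$ bundles shows that with probability at least $(1-\tfrac{1}{mn^2})^n\ge 1-\tfrac{1}{mn}$ every bundle is worth at least $\tfrac{1}{\epsilon\share_i}$ to $\agent_i$, which forces $\WMMS_i\ge \tfrac1\epsilon$. You instead prove a \emph{deterministic} structural bound via weighted list scheduling, namely $\WMMS_i\ge \share_i\bigl(\valu_i(\items)-1/\share_{\min}\bigr)$ whenever item values lie in $[0,1]$, and then apply Hoeffding just once, to the total value $\valu_i(\items)$ viewed as a sum of $m$ independent variables with means exceeding $c$. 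Your route buys a cleaner probabilistic step (a single concentration bound, no union bound over bundles, and no need to argue about a partition that is itself random, which is a mildly delicate point in the paper's write-up) and an explicit $m'_i$; the price is the extra deterministic lemma, whose greedy analysis (ties, zero values, seeding so that ``the last item added'' is well defined) you correctly flag and handle, and which is close in spirit to the paper's Lemma 5.4--5.6 machinery (the Bez\'akov\'a--Dani-style rounding losing at most one item per bundle), though you use it to bound $\WMMS_i$ itself rather than to realize it. Both arguments need the values bounded by $1$, so neither is more general in that respect; your constants and the requirement $m'_i\ge n$ are all legitimate functions of $c,\epsilon,n,\share_1,\dots,\share_n$, as the statement demands.
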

\begin{proof}
If there exists a partition of $\items$, $\pi = B_1, B_2, ..., B_n$ in which $\valu_i(B_j) \ge \dfrac{1}{\epsilon \share_i}$ for every agent $\agent_j$, then:
\begin{equation}
\WMMS_i \ge \share_i \cdot \min(\dfrac{1}{\epsilon \share_i \share_1},...,\dfrac{1}{\epsilon \share_i \share_n}) \ge \min(\dfrac{1}{\epsilon \share_1},...,\dfrac{1}{\epsilon \share_n}) \ge \dfrac{1}{\epsilon}
\end{equation}
Suppose that $m=\dfrac{\alpha}{\epsilon}$, then for every item $b_k$ and every agent $a_j$, we assign this item to this agent with the probability $\share_j$. Let $X_{j,k}$ be a random variable that takes the value $V_i(\{b_k\})$ with the probability $\share_j$ and 0 otherwise. We have ${\mathbb E}[X_{j,k}] > \dfrac{c}{\share_j}$. Let $X_j= \sum_{k=1}^{m} X_{j,k}$. By setting $nt = \gamma$, we rewrite Equation \eqref{eq1} as: 
\begin{equation}
\label{eq3}
{\mathbb P}(|X - \mu| \geq \gamma) \leq 2e^{\frac{-2 \gamma^2}{n}}
\end{equation}
${\mathbb E}(X_j)$ will be at least $\dfrac{\alpha c}{\epsilon \share_j}$. Regarding Equation \eqref{eq3}, we have:
\begin{align*}
&P(|X_j- {\mathbb E}(X_j)| \ge \dfrac{\alpha c}{\epsilon \share_j} -\dfrac{1}{\epsilon \share_i}) \le 2 e^{\dfrac{-2 \epsilon}{\alpha} (\dfrac{\alpha c}{\epsilon \share_j} -\dfrac{1}{\epsilon \share_i})^2} \\ &\Rightarrow P(|X_j- {\mathbb E}(X_j)| \ge \dfrac{\alpha c}{\epsilon \share_j} -\dfrac{1}{\epsilon \share_i}) \le 2 e^{\dfrac{-2}{\epsilon \alpha} (\dfrac{\alpha c}{\share_j} -\dfrac{1}{\share_i})^2}
\end{align*}
Let $\alpha_j$ be the value that $P(|X_j- {\mathbb E}(X_j)| \ge \dfrac{\alpha_j c}{\epsilon \share_j} -\dfrac{1}{\epsilon \share_i}) \le \dfrac{1}{m n^2}$, then:
\begin{align*}
&2 e^{\dfrac{-2}{\epsilon \alpha_j} (\dfrac{\alpha_j c}{\share_j} -\dfrac{1}{\share_i})^2} \le \dfrac{1}{m n^2}  \\
 &\Rightarrow  2e^{\dfrac{-2}{\epsilon \alpha_j} (\dfrac{\alpha_j c}{\share_j} -\dfrac{1}{\share_i})^2} \le \dfrac{\epsilon}{\alpha_j n^2} \\
  &\Rightarrow \dfrac{2}{\epsilon \alpha_j} (\dfrac{\alpha_j c}{\share_j} -\dfrac{1}{\share_i})^2 \ge ln(\dfrac{2\alpha_j n^2}{\epsilon}) \\
 &\Rightarrow \dfrac{2\alpha_j c^2}{\epsilon \share_j^2}+ \dfrac{2}{\epsilon \alpha_j \share_i^2}-\dfrac{4c}{\epsilon \share_i \share_j} \ge \ln(\dfrac{2\alpha_j n^2}{\epsilon}) \\ 
&\Rightarrow \dfrac{2\alpha_j c^2}{\epsilon \share_j^2}+ \dfrac{2}{\epsilon \alpha_j \share_i^2} - ln(\alpha_j) \ge \dfrac{4c}{\epsilon \share_i \share_j} + \ln(\dfrac{2 n^2}{\epsilon})
\end{align*} 
Since the right hand side of the inequality is a constant, and the left hand side is an increasing function on its domain, we can find an $\alpha'_j$ such that whenever $\alpha_j \ge \alpha'_j$, this inequality holds. Therefore, $\valu_i(B_j) \ge \dfrac{1}{\epsilon \share_i}$ with the probability at least $1-\dfrac{1}{m n^2}$ whenever $\alpha_j \ge \alpha'_j$. Thus, by choosing proper $\alpha$ such that for all ${\agent_j \in \agents}$, $\alpha \ge \alpha'_j$,
 $\WMMS_i$ would not be less than $\dfrac{1}{\epsilon}$ with the probability at least $(1-\dfrac{1}{m n^2})^n \ge 1- \dfrac{1}{m n}$.
\end{proof}

\section{Discussion}
In this work we conduct a study of fair allocation when the agents have different entitlements. The original notion of maxmin share is proposed for the case where all agents are the same in terms of the entitlements. We extend this notion to the case of different entitlement and show that unlike the symmetric case, when the entitlements are different, finding an almost $\WMMS$ allocation is not always possible. More precisely, we show that the best allocation that one can hope for is a $1/n$-$\WMMS$ allocation and that such an allocation can be obtained via a somewhat round-robin procedure.



Our experimental results show that in reality, finding a $\WMMS$ allocation is very likely. Of course this is in contrast to our theoretical results. Therefore, it is important to study the problem under reasonable restrictions that rule out the unlikely worst-case scenarios. We initiate this study by considering two limitations to the problem and show substantially better results for these restricted cases. First, we show that if the valuations of the agents for the items are limited by their maxmin share, one can devise a greedy algorithm to achieve a $1/2$-$\WMMS$ allocation. We then proceed by studying the case where the valuations of the agents for the items are drawn from known distributions. We show that in these cases, a $\WMMS$ allocation exists with high probability. 

Although these observations partially justify our empirical results, it seems that the problem is not yet well understood when it comes to real-world settings. Therefore, we believe that future work can investigate this problem under other reasonable and realistic assumptions and further explain why \textit{almost fair} allocations can be guaranteed in practice.


	
	
\bibliography{santaclaus}
\newpage

\end{document}